\newcommand{\Z}{\mathbb Z}
\newcommand{\codespace}{{\Z_c}^{\!\! n}}
\newcommand{\complexityclass}[1]{\textup{\textbf{#1}}}
\newcommand{\computproblem}[1]{\textsc{#1}}
\newcommand{\SP}{\complexityclass{\#P}}
\newcommand{\NP}{\complexityclass{NP}}
\newcommand{\PSPACE}{\complexityclass{PSPACE}}
\newcommand{\ASP}{\complexityclass{ASP}}
\newcommand{\MSP}{\computproblem{MSP}}
\newcommand{\SMSP}{\computproblem{\#MSP}}
\newcommand{\MSPB}{\computproblem{MSP-Black}}
\newcommand{\SMSPB}{\computproblem{\#MSP-Black}}
\newcommand{\MSPW}{\computproblem{MSP-White}}
\newcommand{\SMSPW}{\computproblem{\#MSP-White}}
\newcommand{\MASTER}{\computproblem{Mastermind}}
\newcommand{\SAT}{\computproblem{SAT}}
\newcommand{\USAT}{\computproblem{USAT}}
\newcommand{\SMATCH}{\computproblem{\#Match}}
\newcommand{\sharpp}{\computproblem{\#}}
\newcommand*\circled[1]{\tikz[baseline=(char.base)]{\node[shape=circle,draw,inner sep=1pt](char){#1};}}
\begin{document}
\mainmatter

\title{Hardness of Mastermind}

\titlerunning{Hardness of Mastermind}

\author{Giovanni Viglietta}

\authorrunning{Giovanni Viglietta}

\tocauthor{Giovanni Viglietta}

\institute{University of Pisa, Italy,\\
\email{viglietta@gmail.com}
}
\maketitle

\begin{abstract}
Mastermind is a popular board game released in 1971, where a codemaker chooses a secret pattern of colored pegs, and a codebreaker has to guess it in several trials. After each attempt, the codebreaker gets a response from the codemaker containing some information on the number of correctly guessed pegs. The search space is thus reduced at each turn, and the game continues until the codebreaker is able to find the correct code, or runs out of trials.

In this paper we study several variations of \SMSP, the problem of computing the size of the search space resulting from a given (possibly fictitious) sequence of guesses and responses. Our main contribution is a proof of the \SP-completeness of \SMSP\ under parsimonious reductions, which settles an open problem posed by Stuckman and Zhang in 2005, concerning the complexity of deciding if the secret code is uniquely determined by the previous guesses and responses. Similarly, \SMSP\ stays \SP-complete under Turing reductions even with the promise that the search space has at least $k$ elements, for any constant $k$. (In a regular game of Mastermind, $k=1$.)

All our hardness results hold even in the most restrictive setting, in which there are only two available peg colors, and also if the codemaker's responses contain less information, for instance like in the so-called single-count (black peg) Mastermind variation.
\end{abstract}

\keywords{Mastermind; code-breaking; game; counting; search space.}

\section{Introduction}\label{s1}

\paragraph{\textbf{\emph{Mastermind at a glance.}}} \emph{Mastermind} is a code-breaking board game released in 1971, which sold over 50 million sets in 80 countries. The Israeli postmaster and telecommunication expert Mordecai Meirowitz is usually credited for inventing it in 1970, although an almost identical paper-and-pencil game called \emph{bulls and cows} predated Mastermind, perhaps by more than a century~\cite{wiki}.

\begin{figure}[ht]
\centering
\includegraphics[width=0.645\linewidth]{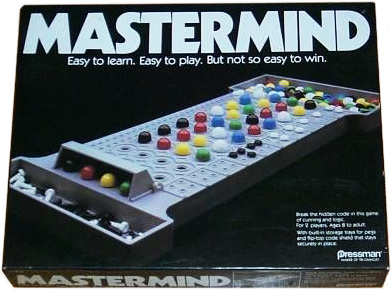}
\caption{A Mastermind box published by Pressman Toy Corporation in 1981, foreshadowing the game's computational hardness.}
\label{fig:1}
\end{figure}

The classic variation of the game is played between a \emph{codemaker}, who chooses a secret sequence of four colored pegs, and a \emph{codebreaker}, who tries to guess it in several attempts. There are six available colors, and the secret code may contain repeated colors. After each attempt, the codebreaker gets a \emph{rating} from the codemaker, consisting in the number of correctly placed pegs in the last guess, and the number of pegs that have the correct color but are misplaced. The rating does not tell which pegs are correct, but only their amount. These two numbers are communicated by the codemaker as a sequence of smaller black pegs and white pegs, respectively (see \hyperref[fig:1]{Figure~\ref*{fig:1}}, where the secret code is concealed behind a shield, and each guess is paired with its rating). If the codebreaker's last guess was wrong, he guesses again, and the game repeats until the secret code is found, or the codebreaker reaches his limit of ten trials. Ideally, the codebreaker plans his new guesses according to the information he collected from the previous guesses. \hyperref[tab:1]{Table~\ref*{tab:1}} depicts a complete game of Mastermind, where colors are encoded as numbers between zero and five, and the codebreaker finally guesses the code at his sixth attempt.

\begin{table}[h!b!p!]
\centering
\caption{A typical game of Mastermind.}
\begin{tabular}{|c|c|}
\hline
\multicolumn{2}{|c|}{Secret code: \circled{0} \circled{1} \circled{2} \circled{3}}\\
\hline
\hline
Guess & Rating\\
\hline
\circled{4} \circled{4} \circled{1} \circled{1} & $\circ$\\
\circled{3} \circled{2} \circled{2} \circled{4} & $\bullet$\,$\circ$\\
\circled{0} \circled{3} \circled{0} \circled{4} & $\bullet$\,$\circ$\\
\circled{5} \circled{5} \circled{3} \circled{4} & $\circ$\\
\circled{1} \circled{2} \circled{0} \circled{3} & $\bullet$\,$\circ$\,$\circ$\,$\circ$\\
\circled{0} \circled{1} \circled{2} \circled{3} & $\bullet$\,$\bullet$\,$\bullet$\,$\bullet$\\
\hline
\end{tabular}
\label{tab:1}
\end{table}

\paragraph{\textbf{\emph{Previous work.}}} Recently, Focardi and Luccio pointed out the unexpected relevance of Mastermind in real-life security issues, by showing how certain API-level bank frauds, aimed at disclosing user PINs, can be interpreted as an extended Mastermind game played between an insider and the bank's computers~\cite{luccio}. On the other hand, Goodrich suggested some applications to genetics of the Mastermind variation in which scores consist of black pegs only, called \emph{single-count (black peg) Mastermind}~\cite{goodrich}.

As a further generalization of the original game, we may consider \emph{$(n,c)$-Mastermind}, where the secret sequence consists of $n$ pegs, and there are $c$ available colors. Chv\'atal proved that the codebreaker can always determine the secret code in $(n,c)$-Mastermind after at most $2n\log_2 c + 4n + \lceil \frac c n\rceil$ guesses, each computable in polynomial time, via a simple divide-and-conquer strategy~\cite{chvatal}. This upper bound was later lowered by a constant factor in~\cite{chen}, while Goodrich also claimed to be able to lower it for single-count (black peg) Mastermind, hence using even less information~\cite{goodrich}. Unfortunately, after a careful inspection, Goodrich's method turns out to outperform Chv\'atal's several techniques given in~\cite{chvatal} asymptotically (as $n$ grows, and $c$ is a function of $n$) only if $n^{1-\varepsilon} < c < (3+\varepsilon)n\log_2 n$, for every $\varepsilon>0$.

However, despite being able to guess any secret code with an efficient strategy, the codebreaker may insist on really minimizing the number of trials, either in the worst case or on average. Knuth proposed a heuristic that exhaustively searches through all possible guesses and ratings, and greedily picks a guess that will minimize the number of eligible solutions, in the worst case~\cite{knuth}. This is practical and worst-case optimal for standard $(4,6)$-Mastermind, but infeasible and suboptimal for even slightly bigger instances. The size of the solution space is employed as an ideal quality indicator also in other heuristics, most notably those based on genetic algorithms~\cite{genetic}.

In order to approach the emerging complexity theoretic issues, Stuckman and Zhang introduced the \textsc{Mastermind Satisfiability Problem} (\MSP) for $(n,c)$-Mastermind, namely the problem of deciding if a given sequence of guesses and ratings has indeed a solution, and proved its \NP-completeness~\cite{zhang}. Similarly, Goodrich showed that also the analogous satisfiability problem for single-count (black peg) Mastermind is \NP-complete~\cite{goodrich}.

Interestingly, Stuckman and Zhang observed that the problem of detecting \MSP\ instances with a unique solution is Turing-reducible to the problem of producing an eligible solution. However, the determination of the exact complexity of the first problem is left open~\cite{zhang}.

\paragraph{\textbf{\emph{Our contribution.}}} In this paper we study \SMSP, the \emph{counting problem} associated with \MSP, i.e., the problem of computing the number of solutions that are compatible with a given set of guesses and ratings. We do this for standard $(n,c)$-Mastermind, as well as its single-count variation with only black peg ratings, and the analogous single-count variation with only white peg ratings, both in general and restricted to instances with a fixed number of colors $c$.

Our main theorem states that, in all the aforementioned variations of Mastermind, \SMSP\ is either trivially polynomial or \SP-complete under \emph{parsimonious reductions}. Capturing the true complexity of \SMSP\ is an improvement on previous results (refer to~\cite{goodrich,zhang}) because:
\begin{itemize}[topsep=2pt, partopsep=0pt]\itemsep2pt
\item[$\bullet$] Evaluating the size of the search space is a natural and recurring subproblem in several heuristics, whereas merely deciding if a set of guesses has a solution seems a more fictitious problem, especially because in a real game of Mastermind we already know that our previous guesses and ratings \emph{do} have a solution.
\item[$\bullet$] The reductions we give are parsimonious, hence they yield stronger versions of all the previously known \NP-completeness proofs for \MSP\ and its variations. Moreover, we obtain the same hardness results even for $(n,2)$-Mastermind, whereas all the previous reductions used unboundedly many colors (see \hyperref[cor:npc]{Corollary~\ref*{cor:npc}}).
\item[$\bullet$] Our main theorem enables simple proofs of a wealth of complexity-related corollaries, including the hardness of detecting unique solutions, which was left open in~\cite{zhang} (see \hyperref[cor:zhang]{Corollary~\ref*{cor:zhang}}).
\end{itemize}

\paragraph{\textbf{\emph{Paper structure.}}} In \hyperref[s2]{Section~\ref*{s2}} we define \SMSP\ and its variations. \hyperref[s3]{Section~\ref*{s3}} contains a statement and proof of our main result, \hyperref[thm:main]{Theorem~\ref*{thm:main}}, and an example of reduction. In \hyperref[s4]{Section~\ref*{s4}} we apply \hyperref[thm:main]{Theorem~\ref*{thm:main}} to several promise problems with different assumptions on the search space, and finally in \hyperref[s5]{Section~\ref*{s5}} we suggest some directions for further research.

\section{Definitions}\label{s2}

\paragraph{\textbf{\emph{Codes and ratings.}}} For $(n,c)$-Mastermind, let the set $\codespace$ be the \emph{code space}, whose elements are \emph{codes} of $n$ numbers ranging from $0$ to $c-1$. Following Chv\'atal, we define two \emph{metrics} on the code space~\cite{chvatal}. If $x=(x_1, \cdots, x_n)$ and $y=(y_1, \cdots, y_n)$ are two codes, let $\alpha(x,y)$ be the number of subscripts $i$ with $x_i=y_i$, and let $\beta(x,y)$ be the largest $\alpha(x,\tilde{y})$, with $\tilde{y}$ running through all the permutations of $y$. As observed in~\cite{zhang}, $n-\alpha(x,y)$ and $n-\beta(x,y)$ are indeed distance functions, respectively on $\codespace$ and $\codespace/S_n$ (i.e., the code space where codes are equivalent up to reordering of their elements).

Given a secret code $s\in \codespace$ chosen by the codemaker, we define the \emph{rating} of a guess $g\in \codespace$, for all the three variants of Mastermind we want to model.
\begin{itemize}
\item[$-$] For standard Mastermind, let $\rho(s,g)=\left(\alpha(s,g), \beta(s,g)-\alpha(s,g)\right)$.
\item[$-$] For single-count black peg Mastermind, let $\rho_b(s,g)=\alpha(s,g)$.
\item[$-$] For single-count white peg Mastermind, let $\rho_w(s,g)=\beta(s,g)$.
\end{itemize}
A guess is considered correct in single-count white peg $(n,c)$-Mastermind whenever its rating is $n$, therefore the secret code has to be guessed only up to reordering of the numbers. As a consequence, the codebreaker can always guess the code after $c-1$ attempts: He can determine the number of pegs of each color via monochromatic guesses, although this is not an optimal strategy when $c$ outgrows $n$. On the other hand, order does matter in both other variants of Mastermind, where the guess has to coincide with the secret code for the codebreaker to win.

\paragraph{\textbf{\emph{Satisfiability problems.}}} Next we define the \textsc{Mastermind Satisfiability Problem} for all three variants of Mastermind.
\begin{problem}
\MSP\ (respectively, \MSPB, \MSPW).\\
\textit{Input:} $(n,c,Q)$, where $Q$ is a finite set of queries of the form $(g,r)$, where $g\in \codespace$ and $r$ is a rating.\\
\textit{Output:} \textsc{Yes} if there exists a code $x\in \codespace$ such that $r=\rho(x,g)$ (respectively, $r=\rho_b(x,g)$, $r=\rho_w(x,g)$) for all $(g,r)\in Q$. \textsc{No} otherwise.
\end{problem}
\MSP\ and \MSPB\ are known to be \NP-complete problems~\cite{goodrich,zhang}. We shall see in \hyperref[cor:npc]{Corollary~\ref*{cor:npc}} how \MSPW\ is \NP-complete, as well.

Further, we may want to restrict our attention to instances of Mastermind with a fixed number of colors. Thus, for every constant $c$, let $(c)$-\MSP\ be the restriction of \MSP\ to problem instances with exactly $c$ colors (i.e., whose input is of the form $(\cdot, c, \cdot)$). Similarly, we define $(c)$-\MSPB\ and $(c)$-\MSPW.

\paragraph{\textbf{\emph{Counting problems.}}} All the above problems are clearly in \NP, thus it makes sense to consider their \emph{counting versions}, namely \SMSP, \SMSPB, \sharpp$(c)$-\MSP, and so on, which are all \SP\ problems~\cite{valiant1}. Basically, these problems ask for the size of the solution space after a number of guesses and ratings, i.e., the number of codes that are coherent with all the guesses and ratings given as input.

Recall that reductions among \SP\ problems that are based on oracles are called \emph{Turing reductions} and are denoted with $\leqslant_{\rm{T}}$, while the more specific reductions that map problem instances preserving the number of solutions are called \emph{parsimonious reductions}, and are denoted with $\leqslant_{\rm{pars}}$. Each type of reduction naturally leads to a different notion of \SP-completeness: For instance, \sharpp$2$-\SAT\ is \SP-complete under Turing reductions, while \sharpp$3$-SAT\ is \SP-complete under parsimonious reductions~\cite{papadimitriou}. Problems that are \SP-complete under parsimonious reductions are \emph{a fortiori} \NP-complete, while it is unknown whether all \NP-complete problems are \SP-complete, even under Turing reductions~\cite{dyer}.

\section{Counting Mastermind solutions}\label{s3}

Next we give a complete classification of the complexities of all the counting problems introduced in \hyperref[s2]{Section~\ref*{s2}}.

\begin{theorem}\label{thm:main}\hspace{0pt}
\begin{enumerate}[topsep=1pt, partopsep=0pt]
\item[\textbf{\emph{a)}}]\label{thm:maina} \SMSP, \SMSPB\ and \SMSPW\ are \SP-complete under parsimonious reductions.
\item[\textbf{\emph{b)}}]\label{thm:mainb} \sharpp$(c)$-\MSP\ and \sharpp$(c)$-\MSPB\ are \SP-complete under parsimonious reductions for every $c\geqslant 2$.
\item[\textbf{\emph{c)}}]\label{thm:mainc} \sharpp$(c)$-\MSPW\ is solvable in deterministic polynomial time for every $c\geqslant 1$.
\end{enumerate}
\end{theorem}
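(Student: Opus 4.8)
The plan is to prove each part of the theorem separately, reusing a single master construction for parts (a) and (b). For the hardness results, I will give a parsimonious reduction from \sharpp$3$-\SAT, which is \SP-complete under parsimonious reductions. Given a 3-CNF formula $\varphi$ on variables $v_1,\dots,v_m$ with clauses $C_1,\dots,C_k$, I want to build an instance $(n,c,Q)$ of the relevant Mastermind variant whose satisfying codes are in bijection with the satisfying assignments of $\varphi$. The natural idea is to reserve a block of peg positions encoding the truth value of each variable, using only two colors $\{0,1\}$ so that the construction already lives in the $(c)$-restricted world for $c=2$; a code then restricted to those positions is a bit-string, i.e. a candidate assignment. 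The work is to design queries $(g,r)$ — each a guess together with its $\alpha$/$\beta$ rating — that force (i) each variable block to be internally consistent (all pegs in the block agree, so it genuinely encodes one bit), and (ii) each clause to be satisfied. A single query of the form ``guess the all-zero code, rating says exactly $t$ positions are correct'' only pins down a Hamming-weight count, not which positions; so I expect to need gadget positions and several cross-referencing queries per clause, exploiting the fact that $\alpha(x,g)$ over a cleverly chosen $g$ reads off a prescribed linear-threshold condition on the bits of $x$. Parsimony must be checked at the end: the map from assignments to codes must be injective (easy, the variable blocks are determined) and surjective onto the solution set (every solution must decode to a satisfying assignment and nothing extra — this is where the gadget design has to be airtight, with no ``junk'' solutions).

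Once the core reduction is in place for \SMSPB\ with $c=2$, I would derive the remaining hardness claims cheaply. Membership of all these problems in \SP\ is immediate since \MSP, \MSPB, \MSPW\ are in \NP\ and solutions are polynomially verifiable, so only hardness needs argument. For \SMSP: any $\rho_b$ rating is recoverable from the full $\rho$ rating (take the black-peg component), so a query set that is parsimoniously hard for \SMSPB\ transfers verbatim to \SMSP, giving part of (a) and the $c\geq 2$ case of \sharpp$(c)$-\MSP\ in (b). For \SMSPW, which only exposes $\beta$, I need to simulate a $\rho_b$ query using only white-peg ratings; the trick is that with two colors, on a guess $g$ whose multiset of colors is forced, $\beta(x,g)$ is determined by the color-count vector of $x$, so by adding a few monochromatic or near-monochromatic calibration queries I can pin the color counts and then recover the $\alpha$ information positionally — I would need to verify this carefully, and it is a candidate for the second-hardest step. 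This yields \SMSPW\ in (a) and \sharpp$(c)$-\MSPW-hardness would seem to follow too — but part (c) says the white-peg problem with \emph{fixed} $c$ is in \P, so the white-peg simulation must inherently blow up the number of colors; I will make sure the reduction into \SMSPW\ uses $\Theta(\text{something})$ colors, consistent with (c).

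For part (c), the claim is a polynomial-time algorithm for \sharpp$(c)$-\MSPW\ for each fixed $c$. The key structural observation is that $\rho_w(x,g)=\beta(x,g)$ depends only on the multiset of colors in $x$ and the multiset of colors in $g$, not on positions at all: if $x$ has $a_j$ pegs of color $j$ and $g$ has $b_j$ pegs of color $j$, then $\beta(x,g)=\sum_{j=0}^{c-1}\min(a_j,b_j)$. Therefore a white-peg query is simply a linear constraint of the form $\sum_j \min(a_j,b_j)=r$ on the color-count vector $(a_0,\dots,a_{c-1})$, which ranges over the simplex $\{a_j\geq 0,\ \sum_j a_j=n\}$. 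Since $c$ is fixed, this simplex contains only $O(n^{c-1})$ integer points, so I can enumerate all valid count vectors in polynomial time by checking every query against each; for each surviving vector $(a_0,\dots,a_{c-1})$ the number of codes realizing it is the multinomial coefficient $\binom{n}{a_0,\dots,a_{c-1}}$, and the answer is the sum of these multinomials over all surviving vectors. This runs in time polynomial in $n$ (with exponent depending on $c$), completing (c). The main obstacle in the whole proof is the design-and-verification of the clause gadget in the master reduction so that it is simultaneously (i) expressible purely through $\alpha$-ratings on two colors, (ii) introduces no spurious solutions, and (iii) strictly parsimonious; everything else is either bookkeeping or the clean counting argument just described.
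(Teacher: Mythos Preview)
Your plan for part (c) and for the core reduction into \sharpp$(2)$-\MSPB\ matches the paper's approach (the paper likewise encodes literals as positions, adds three auxiliary variables $a_i,b_i,c_i$ per clause, and uses an all-zero calibration query plus one query per variable and two per clause). However, two of your ``cheap'' derivations have genuine gaps.

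\textbf{From \SMSPB\ to \SMSP.} Your claim that ``any $\rho_b$ rating is recoverable from the full $\rho$ rating, so a query set that is parsimoniously hard for \SMSPB\ transfers verbatim to \SMSP'' does not work. An \MSP\ query $(g,(b,w))$ imposes \emph{two} constraints, $\alpha(x,g)=b$ and $\beta(x,g)-\alpha(x,g)=w$; to turn an \MSPB\ query $(g,b)$ into an \MSP\ query you must \emph{supply} a value of $w$, and unless $\beta(x,g)$ happens to be constant across all \MSPB\ solutions $x$, no single choice of $w$ preserves the solution set. The paper handles this (its Lemma~5) not by a general transfer principle but by going back to the specific two-color construction and verifying, query by query, that every intended solution yields the same white-peg count; this uses the structural fact that in each guess the misplaced $1$'s can always be paired with misplaced $0$'s in the complementary literal positions. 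You would need to do that verification --- it is short, but it is not automatic.

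\textbf{\SMSPW.} Your sketch here is internally inconsistent. With $c=2$, $\beta(x,g)=\min(a_0,b_0)+\min(a_1,b_1)$ depends only on the color-count vectors, so white-peg queries can \emph{never} ``recover the $\alpha$ information positionally'' as you suggest; no amount of calibration changes this. You correctly observe at the end that the white-peg reduction must blow up the number of colors, but that observation invalidates the simulation idea that precedes it. The paper's approach (its Lemma~3) is a \emph{separate, direct} reduction from \sharpp$3$-\SAT\ using $2v+6m+1$ colors: one color per literal, three auxiliary literal-pairs per clause, plus a mask color $\ast$ that is forced absent by a query $((\ast,\dots,\ast),0)$; each variable query forces exactly one of $x,\bar x$ to appear, and two queries per clause force satisfaction while pinning the auxiliaries uniquely. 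You should plan for a standalone \SMSPW\ construction along these lines rather than trying to simulate $\alpha$ through $\beta$.
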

(Notice that \sharpp$(1)$-\MSP\ and \sharpp$(1)$-\MSPB\ are trivially solvable in deterministic linear time.)

\begin{lemma}\label{l1}
For every $c\geqslant 1$, \sharpp$(c)$-\MSPW\ is solvable in deterministic polynomial time.
\end{lemma}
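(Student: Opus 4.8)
The plan is to solve the counting problem at hand by exploiting the fact that a single-count white peg rating $\rho_w(x,g)=\beta(x,g)$ is invariant under permutations of the pegs, so that whether a code is compatible with a given set of queries depends only on its \emph{color histogram}. For a code $x\in\codespace$ and a color $j\in\{0,\dots,c-1\}$, write $h_x(j)$ for the number of indices $i$ with $x_i=j$, and put $h_x=(h_x(0),\dots,h_x(c-1))$, so that $\sum_j h_x(j)=n$. The definition of $\beta$ through permutations of $g$ gives at once
\[
\beta(x,g)=\sum_{j=0}^{c-1}\min\{h_x(j),\,h_g(j)\},
\]
since an optimal reordering of $g$ matches, color by color, as many pegs as the smaller of the two counts permits. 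Hence the query $(g,r)$ is satisfied by $x$ exactly when $\sum_j\min\{h_x(j),h_g(j)\}=r$, a condition on $x$ that factors through $h_x$.

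First I would compute $h_g$ for every query $(g,r)\in Q$. Then I would enumerate all histogram vectors $\mathbf a=(a_0,\dots,a_{c-1})$ of nonnegative integers with $\sum_j a_j=n$; there are $\frac{(n+c-1)!}{(c-1)!\,n!}=O(n^{c-1})$ of them, which is polynomial in the input size because $c$ is a fixed constant. For each $\mathbf a$ I test, in time $O(|Q|\cdot c)$, whether $\sum_j\min\{a_j,h_g(j)\}=r$ holds for every $(g,r)\in Q$, calling $\mathbf a$ \emph{feasible} if so. By the first paragraph, the codes with histogram $\mathbf a$ are either all compatible with $Q$ or all incompatible, according to the feasibility of $\mathbf a$; and there are exactly $\frac{n!}{a_0!\,a_1!\cdots a_{c-1}!}$ codes with histogram $\mathbf a$. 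Therefore the required output is
\[
\sum_{\mathbf a\ \mathrm{feasible}}\ \frac{n!}{a_0!\,a_1!\cdots a_{c-1}!}.
\]

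It remains to bound the running time. Each multinomial coefficient is at most $c^n$ (their sum over all histograms equals $c^n$), hence has $O(n\log c)$ bits and is obtained with $O(n)$ big-integer multiplications and divisions; summing the $O(n^{c-1})$ feasible terms then costs polynomial time, and the output itself has $O(n\log c)$ bits. For $c=1$ the single histogram is $(n)$, so the statement holds trivially there as well.

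I do not anticipate a real obstacle here: permutation-invariance collapses the exponentially large code space onto the set of histograms, whose cardinality $\frac{(n+c-1)!}{(c-1)!\,n!}$ is polynomial \emph{precisely because $c$ is held constant}. This is the single point deserving attention, and it is exactly where the argument must break down for unbounded $c$ --- in harmony with part \textbf{\emph{a)}} of \hyperref[thm:main]{Theorem~\ref*{thm:main}}, which declares \SMSPW\ itself \SP-complete.
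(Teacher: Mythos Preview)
Your argument is correct and follows the same idea as the paper: since $\rho_w$ depends only on the color histogram, one enumerates the $\binom{n+c-1}{c-1}=\Theta(n^{c-1})$ histograms and checks each against $Q$. You are simply more explicit than the paper's one-line proof, supplying the formula $\beta(x,g)=\sum_j\min\{h_x(j),h_g(j)\}$, the multinomial weight for lifting a feasible histogram back to $\codespace$, and a bit-complexity bound---details the paper leaves to the reader.
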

\begin{proof}
In $\codespace/S_n$ there are only ${{n+c-1}\choose{c-1}}=\Theta (n^{c-1})$ possible codes to check against all the given queries, hence the whole process can be carried out in polynomial time, for any constant $c$.
\qed
\end{proof}

\begin{lemma}\label{l2}
For every $c\geqslant 1$,\\
\centerline{\sharpp$(c)$-\MSP\ $\leqslant_{\rm{pars}}$ \sharpp$(c+1)$-\MSP,}
\centerline{\sharpp$(c)$-\MSPB\ $\leqslant_{\rm{pars}}$ \sharpp$(c+1)$-\MSPB.}
\end{lemma}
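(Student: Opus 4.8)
The plan is to give a reduction that takes an instance of $(c)$-\MSP\ on codes of length $n$ and produces an instance of $(c+1)$-\MSP\ with the same number of solutions, by forcing the new color $c$ never to appear in any eligible secret code. The natural device is to append a block of extra positions whose ratings pin down exactly how many pegs of each color the secret has in that block, leaving no freedom to ``spend'' the new color there, and simultaneously to certify that the new color is absent from the original $n$ positions as well. So first I would describe the output instance: keep the $n$ original positions, add some number $m$ of fresh positions, and translate each original query $(g,r)$ into a query on the longer code by padding $g$ with a fixed pattern on the new positions; then add a small fixed family of auxiliary queries that constrain the new positions and exclude color $c$ globally.

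Concretely, the key step is the design of the auxiliary queries. Consider the all-$c$ guess $g^\star$ of length $n+m$ (every position holds the new color $c$). Under $\rho$, its black-peg score $\alpha(s,g^\star)$ counts occurrences of color $c$ in $s$, and I would add the query $(g^\star, (0, \cdot))$ — more precisely choose the white component so that $\beta(s,g^\star)=0$ too — forcing $s$ to contain the color $c$ zero times. That alone already guarantees that any solution of the $(c+1)$-instance uses only colors $0,\dots,c-1$, so its restriction to the first $n$ positions is a solution of the original $(c)$-instance. To make the correspondence a bijection I must also fix the new positions completely: I would take $m=n$ (or any convenient size) and, for each original color $j<c$, include a guess that is constant equal to $j$ on the new block and, say, equal to $c$ on the old block, with black-peg rating equal to the number of new positions I intend to hold value $j$; choosing these numbers to sum to $m$ nails down the new block up to reordering, and by using position-sensitive guesses (e.g. one new position at a time, or a guess that is $j$ in exactly the $k$-th new slot) I can nail it down exactly. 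The padded translations of the original queries then have ratings that are the old ratings plus the known contribution of the fixed new block, which is a quantity I can compute when building the instance. Verifying that (i) every solution of the new instance restricts to a solution of the old one, and (ii) every solution of the old one extends uniquely to a solution of the new one, is then a routine check, giving a parsimonious reduction. The \MSPB\ case is literally the same construction, since it only ever uses the black-peg component $\rho_b=\alpha$, and I never needed the white component except to zero it out — for \MSPB\ the single query $(g^\star,0)$ already forbids color $c$.

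The main obstacle I anticipate is bookkeeping rather than ideas: I must make sure the auxiliary queries genuinely determine the new block \emph{exactly} (not merely up to reordering, which would inflate the count by a multinomial factor and break parsimony), while keeping the padding of the old queries consistent with that exact choice. The clean way around this is to let each new position be tested individually — for position $n+k$ include a guess that places color $0$ there and color $c$ everywhere else, whose black score forces that position to be $0$ or not — so the new block is forced to a single fixed string, the old queries are padded with that same fixed string, and their ratings shift by an explicitly computable constant. With the new block rigid and color $c$ globally excluded, the map ``solution of $(c+1)$-instance $\mapsto$ its first $n$ coordinates'' is the required bijection, completing both reductions.
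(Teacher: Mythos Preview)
Your construction is correct, but it is far more elaborate than necessary. The paper's proof is a one-liner: given the instance $(n,c,Q)$, output $(n,c+1,Q\cup\{(g,r)\})$, where $g=(c,c,\dots,c)$ is the all-$c$ guess of length $n$ and $r=(0,0)$ (respectively $r=0$ in the black-peg case). This single extra query forces the new color $c$ to be absent from the secret, so the solution set is literally unchanged and the reduction is trivially parsimonious. No new positions are introduced, no padding of the old queries is needed, and there is no block to pin down.

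Your detour through an enlarged code length $n+m$, per-position auxiliary queries, and rating shifts does work (your computation that padding by a fixed block $t$ shifts both $\alpha$ and $\beta$ by exactly $m$ is right, since $\beta(x,y)=\sum_j\min(\#_j(x),\#_j(y))$), but all of that machinery exists only to undo a complication you introduced yourself. The ``main obstacle'' you anticipate---making the new block rigid rather than determined only up to reordering---simply evaporates once you notice that $m=0$ is allowed. In short: right idea (kill color $c$ with an all-$c$ guess rated zero), but you should have stopped there.
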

\begin{proof}
Given the instance $(n,c,Q)$ of \sharpp$(c)$-\MSP\ (respectively, \sharpp$(c)$-\MSPB), we convert it into $\left(n,c+1,Q\cup\{(g,r)\}\right)$, where $g$ is a sequence of $n$ consecutive $c$'s, and $r=(0,0)$ (respectively, $r=0$). The new query $(g,r)$ implies that the new color $c$ does not occur in the secret code, hence the number of solutions is preserved and the reduction is indeed parsimonious.
\qed
\end{proof}

\begin{lemma}\label{l3}
\sharpp$3$-\SAT\ $\leqslant_{\rm{pars}}$ \SMSPW.
\end{lemma}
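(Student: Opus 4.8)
The plan is to reduce from \sharpp$3$-\SAT, which is \SP-complete under parsimonious reductions, by encoding a 3-CNF formula $\varphi$ on variables $v_1,\dots,v_m$ and clauses $C_1,\dots,C_k$ as a single instance $(n,c,Q)$ of \MSPW\ whose solutions (codes up to reordering) are in bijection with the satisfying assignments of $\varphi$. The main design difficulty is that white-peg ratings $\rho_w(s,g)=\beta(s,g)$ are permutation-invariant: a query only constrains the multiset of colors in the secret code, not their positions. So I must find a way to make the multiset of colors carry the full Boolean content of $\varphi$, and I must arrange the color set $c$ and the length $n$ so that positional information is genuinely irrelevant. A natural first move is to introduce, for each variable $v_i$, a pair of dedicated colors, say a ``true'' color $t_i$ and a ``false'' color $f_i$, and to force, via queries, that the secret code contains exactly one peg from each pair $\{t_i,f_i\}$ — this is what selects an assignment. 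The count of pegs of color $t_i$ plus the count of pegs of color $f_i$ should be pinned to $1$ by a suitable white-peg query (or pair of queries) using a guess that is monochromatic or bichromatic in those two colors together with enough ``padding'' colors to saturate the rest of the rating.

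The heart of the reduction is clause satisfaction. For each clause $C_j$ I want a mechanism that is violated exactly when all three of its literals are false under the chosen assignment. Since $\beta$ counts, up to the best matching permutation, how many of the secret pegs can be paired with equal-colored guess pegs, I would build a guess $g_j$ that ``offers'' the three colors corresponding to the three literals of $C_j$ (i.e. $t_i$ if $v_i$ appears positively, $f_i$ if negatively), padded so that the attainable rating is maximal precisely when at least one of those three colors is actually present in the secret code. Concretely, if the secret already contains a known fixed reservoir of auxiliary colors that $g_j$ also matches, then $\beta(s,g_j)$ equals (that fixed number) plus (how many of the three clause colors occur in $s$); demanding $\rho_w(s,g_j)$ be at least ``fixed $+1$'' forces at least one literal true. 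But \MSPW\ queries demand an \emph{exact} rating, not a lower bound, so I will instead engineer the guess and the padding so that the maximum possible value of $\beta$ is exactly the target, i.e. the extra matches beyond the reservoir can never exceed $1$ (because at most one of $t_i,f_i$ is present and the three variables of $C_j$ are distinct, assuming w.l.o.g.\ no clause repeats a variable); then the exact equation $\beta=\text{fixed}+1$ is equivalent to ``$C_j$ satisfied.'' Handling clauses whose literals collide on a variable, or repeated literals, is a routine preprocessing step on $\varphi$.

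The steps, in order, would be: (1)~preprocess $\varphi$ into an equivalent 3-CNF with no repeated variables inside a clause, preserving the number of satisfying assignments (a parsimonious normalization); (2)~fix the color palette: two colors per variable, plus a pool of padding colors, and set $n$ large enough to absorb all the slack a permutation could exploit; (3)~write the ``assignment'' queries forcing exactly one of $\{t_i,f_i\}$ per $i$ and fixing the counts of all padding colors, so that the only remaining freedom is the choice of assignment; (4)~write one ``clause'' query per $C_j$ whose exact white-peg rating holds iff some literal of $C_j$ is true; (5)~verify the bijection between satisfying assignments and valid codes modulo $S_n$, and check that the instance is built in polynomial time, so the reduction is parsimonious. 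Combined with Lemma~\ref{l2} and the \SP-completeness of \sharpp$3$-\SAT, this gives \SMSPW\ as \SP-complete under $\leqslant_{\rm pars}$, and together with Lemma~\ref{l1} pins down the $(c)$-restricted case dichotomy in Theorem~\ref{thm:main}.

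The step I expect to be the genuine obstacle is (4) together with the interaction between (2) and (3): because $\beta$ is the \emph{best} alignment over all permutations, a single white-peg query is a surprisingly blunt instrument, and it is easy to accidentally let the codebreaker's code ``cheat'' by reshuffling padding colors to inflate a clause rating. Getting the padding counts and the guess multiplicities calibrated so that each clause query reads exactly $+1$ when satisfied and exactly $+0$ when violated — with no cross-talk between different clause queries and no way for the count equations of step (3) to be met by a non-assignment code — is the delicate part, and is where I would spend the bulk of the argument; once the gadget numbers are right, the bijection and the polynomial-time bound are immediate.
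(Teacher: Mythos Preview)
Your overall architecture matches the paper's: two dedicated colors per variable, a padding/mask color excluded by a rating-$0$ query, and per-variable queries forcing exactly one of $\{t_i,f_i\}$ into the secret multiset. The gap is in step~(4), the clause gadget, and specifically in the sentence ``the extra matches beyond the reservoir can never exceed $1$ (because at most one of $t_i,f_i$ is present and the three variables of $C_j$ are distinct).'' That inference is wrong. If $C_j=\{\ell_1,\ell_2,\ell_3\}$ with three distinct underlying variables, your guess $g_j$ offers the three literal colors $\ell_1,\ell_2,\ell_3$; the secret contains one color from \emph{each} of the three corresponding pairs, so any subset of $\{\ell_1,\ell_2,\ell_3\}$ may be present. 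The number of extra matches contributed by these positions is therefore exactly the number of satisfied literals of $C_j$, which ranges over $\{0,1,2,3\}$, not $\{0,1\}$. An exact $\beta$-rating built from these three colors plus inert padding cannot encode ``at least one literal satisfied,'' and no calibration of padding multiplicities changes this, because the padding contributes a fixed constant independent of the assignment.

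The paper repairs precisely this defect by introducing, for each clause $C_i$, three fresh auxiliary variables $a_i,b_i,c_i$ (each with its own color pair and its own ``exactly one of the pair'' query). The clause query is then $(\ell_1,\ell_2,\ell_3,a_i,b_i,\ast,\ldots,\ast)$ with rating $3$: here $a_i,b_i$ serve as slack that absorbs the difference between $3$ and the number of true literals, and rating $3$ already forces at least one literal true. But this alone would destroy parsimony: when exactly two literals of $C_i$ are true, either $a_i$ or $b_i$ could be the single auxiliary present, yielding two codes per assignment. A second, tiebreaking query $(\overline{a_i},b_i,c_i,\ast,\ldots,\ast)$ with rating $2$ pins down $a_i,b_i,c_i$ uniquely in each of the three satisfiable cases. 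Your proposal has no analogue of this slack-plus-tiebreak mechanism; adding it is the missing idea, and once it is in place the bijection and polynomial-time bound go through as you sketched.
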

\begin{proof}
Given a 3-CNF Boolean formula $\varphi$ with $v$ variables and $m$ clauses, we map it into an instance of \MSPW\ $(n,c,Q)$. For each clause $C_i$ of $\varphi$, we add three fresh \emph{auxiliary variables} $a_i$, $b_i$, $c_i$. For each variable $x$ (including auxiliary variables), we define two colors $x$ and $\bar{x}$, representing the two possible truth assignments for $x$. We further add the \emph{mask color} $\ast$, thus getting $c=2v+6m+1$ colors in total. We let $n=v+3m$ (we may safely assume that $n\geqslant 5$), and we construct $Q$ as follows.
\begin{enumerate}[label=\arabic*)]
\item\label{l3s1} Add the query $\Big( (\ast, \ast, \ast, \cdots, \ast), 0 \Big)$.
\item\label{l3s2} For each variable $x$, add the query $\Big( (x, x, \bar{x}, \bar{x}, \ast, \ast, \ast, \cdots, \ast), 1 \Big)$.
\item\label{l3s3} For each clause $C_i=\{\ell_1 ,\ell_2, \ell_3\}$ (where each literal may be positive or negative), add the query $\Big( (\ell_1, \ell_2, \ell_3, a_i, b_i, \ast, \ast, \ast, \cdots, \ast), 3 \Big)$.
\item\label{l3s4} For each clause $C_i$, further add the query $\Big( (\overline{a_i}, b_i, c_i, \ast, \ast, \ast, \cdots, \ast ), 2 \Big)$.
\end{enumerate}
By~\hyperref[l3s1]{(1)}, the mask color does not occur in the secret code; by~\hyperref[l3s2]{(2)}, each variable occurs in the secret code exactly once, either as a positive or a negative literal. Moreover, by~\hyperref[l3s3]{(3)}, at least one literal from each clause must appear in the secret code. Depending on the exact number of literals from $C_i$ that appear in the code (either one, two or three), the queries in~\hyperref[l3s3]{(3)} and~\hyperref[l3s4]{(4)} always force the values of the auxiliary variables $a_i$, $b_i$ and $c_i$. (Notice that, without~\hyperref[l3s4]{(4)}, there would be two choices for $a_i$ and $b_i$, in case exactly two literals of $C_i$ appeared in the code.) As a consequence, the reduction is indeed parsimonious.
\qed
\end{proof}

\begin{lemma}\label{l4}
\sharpp$3$-\SAT\ $\leqslant_{\rm{pars}}$ \sharpp$(2)$-\MSPB.
\end{lemma}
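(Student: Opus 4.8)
The plan is to reduce \sharpp$3$-\SAT\ to \sharpp$(2)$-\MSPB\ by combining two ideas: first, a parsimonious reduction from \sharpp$3$-\SAT\ to \SMSPB\ (with unboundedly many colors), and then a color-reduction gadget that simulates many colors using only two, while preserving the number of solutions. Given a 3-CNF formula $\varphi$ with $v$ variables and $m$ clauses, the first stage would mimic the structure of \hyperref[l3]{Lemma~\ref*{l3}}: introduce a color per literal, a mask color $\ast$, and auxiliary colors per clause; use a monochromatic $\ast$-query to exclude the mask color, use ``exactly one of $x,\bar x$'' queries to force each variable to receive exactly one polarity, and use black-peg queries on each clause (together with auxiliary-variable queries) to force at least one satisfied literal per clause and to pin down the auxiliary colors uniquely. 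The main point is that for single-count \emph{black peg} Mastermind, $\rho_b(s,g)=\alpha(s,g)$ counts exact positional matches, so each query $(g,r)$ is a constraint of the form ``exactly $r$ of the $n$ coordinates of the secret agree with $g$.'' The clause and auxiliary gadgets must be arranged so that, as the number of true literals of $C_i$ ranges over $\{1,2,3\}$, the rating constraints admit a unique completion of the auxiliary coordinates — this is the parsimony check, handled exactly as the parenthetical remark in \hyperref[l3]{Lemma~\ref*{l3}} handles it.

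The second stage is the color-reduction. To pass from $c$ colors to $2$ colors, I would encode each of the $c$ colors as a distinct binary string of length $k=\lceil\log_2 c\rceil$, so that a code over $\codespace$ of length $n$ becomes a binary code of length $nk$, each original coordinate expanding into a block of $k$ bits. An original query $(g,r)$ with $g$ using color-symbols does not translate directly, since a black-peg match in the binary instance is a single-bit agreement, not a whole-block agreement; so the reduction needs auxiliary queries that force each block of the secret to spell a \emph{legal} codeword (one of the $c$ used strings, not one of the $2^k-c$ junk strings) and then further queries that read off, block by block, enough bit-level information to recover the original rating. Concretely I expect to add, for each block, a family of monochromatic-on-that-block queries that together force the block into the admissible set and pin its contribution, so that the binary instance's solutions are in bijection with the original instance's solutions.

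The hard part will be the second stage — specifically, designing the binary gadget so that the translation of each original query is faithful \emph{and} the reduction stays parsimonious. A black-peg rating is a sum of per-coordinate $0/1$ indicators, and in the binary world a block of $k$ bits contributes somewhere between $0$ and $k$ agreements with a query block, so one has to engineer the set of queries so that the only secrets surviving all constraints are exactly the binary encodings of codes satisfying $\varphi$, with no spurious solutions coming from junk blocks or from partial-block agreements summing coincidentally to the right totals. I would control this by making the ``legality'' queries force each block deterministically once the original color is fixed, and by choosing the length parameters (the number of $\ast$-padding coordinates, the block width $k$) so that ratings in different queries cannot interfere. Once each original color is rigidly encoded, the original queries can be replayed as conjunctions of block-wise equality tests expressed through black-peg counts, and the bijection with satisfying assignments of $\varphi$ — hence parsimony — follows from the analysis of \hyperref[l3]{Lemma~\ref*{l3}} applied to the already-rigid blocks. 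Combining both stages, and noting \hyperref[l2]{Lemma~\ref*{l2}} lets us freely add an unused color if a parity adjustment is needed, yields \sharpp$3$-\SAT\ $\leqslant_{\rm pars}$ \sharpp$(2)$-\MSPB.
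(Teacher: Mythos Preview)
Your two-stage plan takes a detour that the paper avoids, and the second stage has a real gap. The paper reduces \emph{directly}: it encodes each literal (including three auxiliaries $a_i,b_i,c_i$ per clause) as a \emph{position} in a binary code of length $2n$ with $n=v+3m$, setting $s_x=1$ for ``$x$ true'' and $s_{\bar x}=1$ for ``$x$ false.'' The queries are the all-zero string with rating $n$ (forcing exactly $n$ ones), a two-bit flip at $\{x,\bar x\}$ with rating still $n$ for each variable (forcing $s_x\neq s_{\bar x}$), and five- and three-bit flips with rating $n+1$ for each clause (forcing satisfiability and pinning the auxiliaries). The driving arithmetic is that if flipping $k$ bits of a guess changes the black-peg rating by $r$, then exactly $(k+r)/2$ of those bits are correct in the secret. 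No color beyond $\{0,1\}$ is ever introduced.

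Your Stage~2, by contrast, encodes each of $c$ colors as a $\lceil\log_2 c\rceil$-bit string and then tries to replay the original queries in binary. But a black-peg rating on the binary instance is a global sum of bit-level agreements, and two \emph{distinct} colors may agree on anywhere from $0$ to $k-1$ bits depending on the pair. So an original constraint ``exactly $r$ of the $n$ blocks match'' does not correspond to any fixed binary rating: the $n-r$ non-matching blocks contribute a secret-dependent amount. The sentence ``the original queries can be replayed as conjunctions of block-wise equality tests expressed through black-peg counts'' is precisely the step that fails---a single black-peg query yields only a total, never block-wise data, and you give no mechanism to decompose it. (An equidistant code such as one-hot vectors of length $c$ would repair the arithmetic, but then you still owe a parsimonious gadget forcing each block to be a unit vector, at which point you have essentially rebuilt the paper's positions-for-literals construction the long way.) Stage~1 is also thinner than you suggest: Lemma~\ref{l3} is for \emph{white} pegs, whose ratings are permutation-invariant, so ``mimicking'' it for black pegs already requires you to assign each variable a fixed position and to exclude all foreign colors there---none of which you specify.
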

\begin{proof}
We proceed along the lines of the proof of \hyperref[l3]{Lemma~\ref*{l3}}, with similar notation. We add the same auxiliary variables $a_i$, $b_i$, $c_i$ for each clause $C_i$, and we construct the instance of $(2)$-\MSPB\ $(2n,2,Q)$, where $n=v+3m$. This time we encode literals as \emph{positions} in the code: For each variable $x$, we allocate two specific positions $x$ and $\bar{x}$, so that $g_x=1$ (respectively, $g_{\bar{x}}=1$) in code $g=(g_1,\cdots,g_{2n})$ if and only if variable $x$ is assigned the value true (respectively, false). Notice that, in contrast with \hyperref[l3]{Lemma~\ref*{l3}}, we are not using a mask color here. $Q$ is constructed as follows.
\begin{enumerate}[label=\arabic*)]
\item\label{l4s1} Add the query $\Big( (0,0,0,\cdots,0), n \Big)$.
\item\label{l4s2} For each variable $x$, add the query $\left(g, n\right)$, where $g_j=1$ if and only if $j\in\{x, \bar{x}\}$.
\item\label{l4s3} For each clause $C_i=\{\ell_1 ,\ell_2, \ell_3\}$, add the query $\left(g, n+1\right)$, where $g_j=1$ if and only if $j\in\{\ell_1, \ell_2, \ell_3, a_i, b_i\}$. (Without loss of generality, we may assume that $\ell_1$, $\ell_2$ and $\ell_3$ are occurrences of three mutually distinct variables~\cite{yato}.)
\item\label{l4s4} For each clause $C_i$, further add the query $\left(g, n+1\right)$, where $g_j=1$ if and only if $j\in\{\overline{a_i}, b_i, c_i\}$.
\end{enumerate}
By~\hyperref[l4s1]{(1)}, every solution must contain $n$ times $0$ and $n$ times $1$, in some order. The semantics of~\hyperref[l4s2]{(2)}, \hyperref[l4s3]{(3)} and~\hyperref[l4s4]{(4)} is the same as that of the corresponding steps in \hyperref[l3]{Lemma~\ref*{l3}}, hence our construction yields the desired parsimonious reduction. Indeed, observe that, if altering $k$ bits of a binary code increases its rating by $r$, then exactly $\frac{k+r}{2}$ of those $k$ bits are set to the right value. In~\hyperref[l4s2]{(2)}, altering $k=2$ bits of the code in~\hyperref[l4s1]{(1)} increases its rating by $r=0$, hence exactly one of those bits has the right value, which means that $s_x\neq s_{\bar{x}}$ in any solution $s$. Similarly, in~\hyperref[l4s3]{(3)} (respectively,~\hyperref[l4s4]{(4)}), $k=5$ (respectively, $k=3$) and $r=1$, hence exactly three (respectively, two) of the bits set to $1$ are correct (cf.~the ratings in \hyperref[l3]{Lemma~\ref*{l3}}).
\qed
\end{proof}

\begin{lemma}\label{l5}
\sharpp$3$-\SAT\ $\leqslant_{\rm{pars}}$ \sharpp$(2)$-\MSP.
\end{lemma}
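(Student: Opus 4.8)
The plan is to combine the two encoding ideas already developed: use the ``literal as position'' binary encoding of Lemma~\ref{l4} to handle the assignment of variables, but now exploit the fact that in \SMSP\ the codemaker returns the pair $(\alpha,\beta-\alpha)$ rather than just $\alpha$. So I would start from a 3-CNF formula $\varphi$ with $v$ variables and $m$ clauses, add the auxiliary variables $a_i,b_i,c_i$ for each clause $C_i$ exactly as in Lemma~\ref{l3} and Lemma~\ref{l4}, set $n=v+3m$, and build a $(2)$-\MSP\ instance on codes of length $2n$ over colors $\{0,1\}$, where positions come in pairs $\{x,\bar x\}$ encoding the truth value of $x$. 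The first query would again be $\big((0,\dots,0),(0,?)\big)$-style, but here I must pick the rating so that it forces every solution to have exactly $n$ ones and $n$ zeros: the white-peg component $\beta$ of $\rho(s,(0,\dots,0))$ equals the number of $0$'s in $s$ when comparing against the all-zero guess (since $\beta$ maximizes over permutations, for a monochromatic guess $\beta=\alpha=$ number of matching positions), so the rating $(\#\text{zeros in }s,\,0)$ pins the Hamming weight. Then the per-variable and per-clause queries would be the same bit patterns as in Lemma~\ref{l4}, with ratings chosen using the ``altering $k$ bits changes $\alpha$ by $r$ means $(k+r)/2$ correct'' identity; the $\beta-\alpha$ component of each such rating will be forced to a fixed value by the weight constraint, so I would just compute it and include it, which keeps the reduction parsimonious.

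The key steps, in order, are: (1) describe the colors $\{0,1\}$, the code length $2n$, and the position-pairing of literals; (2) write down query~\ref{l4s1} with its rating and argue it forces Hamming weight $n$; (3) write down the variable queries (pattern supported on $\{x,\bar x\}$) with rating forcing $s_x\ne s_{\bar x}$; (4) write down the two clause queries (patterns on $\{\ell_1,\ell_2,\ell_3,a_i,b_i\}$ and on $\{\overline{a_i},b_i,c_i\}$) with ratings forcing ``at least one literal true'' and pinning $a_i,b_i,c_i$; (5) observe that for a fixed-weight binary code the $\beta$ value against any fixed guess is a deterministic function of $\alpha$ and the two weights, hence the white-peg components add no freedom and the map from satisfying assignments to solutions is a bijection; (6) conclude the reduction is parsimonious and polynomial-time, and cite Lemma~\ref{l2} plus \sharpp$3$-\SAT\ being \SP-complete under parsimonious reductions to get that \SMSP\ and all \sharpp$(c)$-\MSP\ with $c\ge 2$ are \SP-complete.

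The main obstacle I expect is step~(5): unlike single-count black-peg Mastermind, the standard rating carries the extra white-peg number $\beta-\alpha$, and I must be certain that specifying it does not accidentally rule out any satisfying assignment nor admit spurious solutions. The clean way around this is the weight-fixing query: once every solution $s$ has exactly $n$ ones, for any guess $g$ of weight $w_g$ the quantity $\beta(s,g)$ is exactly $\min(w_g,n)+\min(2n-w_g,n)-n$ rearranged — more simply, $\beta(s,g)=\#\{\text{ones of }g\text{ matched}\}+\#\{\text{zeros of }g\text{ matched}\}$ is determined because with equal total counts of each symbol the best permutation matches all of the minority overlaps; so $\beta$ is a constant depending only on $w_g$, independent of which weight-$n$ code $s$ is. Thus I can precompute the white-peg entry of every query and the analysis reduces verbatim to the $\alpha$-only argument of Lemma~\ref{l4}. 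A secondary point to check is that the all-zero query and the weight constraint are mutually consistent and that $n\ge$ some small constant so that the clause gadgets fit, which is handled exactly as in the earlier lemmas.
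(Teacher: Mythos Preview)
Your proposal is correct and follows essentially the same approach as the paper: reuse the Lemma~\ref{l4} construction verbatim for the black-peg component and fill in the white-peg component so that no solutions are gained or lost. The only difference is presentational: the paper computes the white-peg count for each of the four query types by a short case analysis (obtaining $(n,0)$, $(n,2)$, $(n+1,4)$, $(n+1,2)$), whereas you invoke the cleaner general fact that for binary codes $\beta(s,g)=\min(w_s,w_g)+\min(2n-w_s,2n-w_g)$, which once $w_s=n$ is forced by the first query depends only on $w_g$; this lets you precompute all white-peg entries uniformly and reduces the correctness argument to Lemma~\ref{l4} without further work. (Minor slips: your displayed formula for $\beta$ has a stray ``$-n$'', and the $(0,?)$ placeholder for the first rating should read $(n,0)$; also the appeal to Lemma~\ref{l2} belongs to the main theorem, not to this lemma.)
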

\begin{proof}
We replicate the construction given in the proof of \hyperref[l4]{Lemma~\ref*{l4}}, but we use the proper ratings: Recall that the ratings of \MSP\ are pairs of scores (black pegs and white pegs). The first score (black pegs) has the same meaning as in \MSPB, and we maintain these scores unchanged from the previous construction. By doing so, we already get the desired set of solutions, hence we merely have to show how to fill out the remaining scores (white pegs) without losing solutions.

Referring to the proof of \hyperref[l4]{Lemma~\ref*{l4}}, we change the rating in~\hyperref[l4s1]{(1)} from $n$ to $(n,0)$, because every $0$ in the guess is either correct at the correct place, or redundant.

The rating in~\hyperref[l4s2]{(2)} is changed from $n$ to $(n,2)$. Indeed, let $y$ be any other variable (distinct from $x$), so that $g_y=g_{\bar{y}}=0$. Then, exactly one between $g_y$ and $g_{\bar{y}}$ is a misplaced $0$, which can be switched with the misplaced $1$ from either $g_x$ or $g_{\bar{x}}$. All the other $0$'s in $g$ are either correct at the correct place, or redundant.

Similarly, the rating in~\hyperref[l4s3]{(3)} (respectively,~\hyperref[l4s4]{(4)}) changes from $n+1$ to $(n+1,4)$ (respectively, $(n+1,2)$). Indeed, exactly two (respectively, one) $1$'s are in a wrong position in $g$. If either $g_x=1$ or $g_{\bar{x}}=1$ is wrong, then both $g_x$ and $g_{\bar{x}}$ are wrong and of opposite colors, hence they can be switched. Once again, all the other $0$'s in $g$ are either correct at the correct place, or redundant.
\qed
\end{proof}

\begin{proof}[of {\hyperref[thm:main]{Theorem~\ref*{thm:main}}}]
All the claims easily follow from \hyperref[l1]{Lemma~\ref*{l1}}, \hyperref[l2]{Lemma~\ref*{l2}}, \hyperref[l3]{Lemma~\ref*{l3}}, \hyperref[l4]{Lemma~\ref*{l4}}, \hyperref[l5]{Lemma~\ref*{l5}}, and the \SP-completeness of \sharpp$3$-\SAT\ under parsimonious reductions~\cite{papadimitriou}.
\qed
\end{proof}

\paragraph{\textbf{\emph{Example.}}} As an illustration of \hyperref[l5]{Lemma~\ref*{l5}}, we show how the Boolean formula $(x\vee \neg y \vee z)\wedge(\neg x\vee y\vee w)\wedge(y\vee \neg z\vee \neg w)$ is translated into a set of queries for $(2)$-\MSP. For visual convenience, $0$'s and $1$'s are represented as white and black circles, respectively.

\vspace{-10pt}
\begin{table}[h!p!]
\centering
\begin{tabular}{|p{8pt}p{8pt}|p{8pt}p{8pt}|p{8pt}p{8pt}|p{8pt}p{8pt}|p{8pt}p{8pt}p{8pt}p{8pt}p{8pt}p{8pt}|p{8pt}p{8pt}p{8pt}p{8pt}p{8pt}p{8pt}|p{8pt}p{8pt}p{8pt}p{8pt}p{8pt}p{8pt}|c|}
\hline
\centering $x$ & \centering $\overline{x}$ & \centering $y$ & \centering $\overline{y}$ & \centering $z$ & \centering $\overline{z}$ & \centering $w$ & \centering $\overline{w}$ & \centering $a_1$ & \centering $\overline{a}_1$ & \centering $b_1$ & \centering $\overline{b}_1$ & \centering $c_1$ & \centering $\overline{c}_1$ & \centering $a_2$ & \centering $\overline{a}_2$ & \centering $b_2$ & \centering $\overline{b}_2$ & \centering $c_2$ & \centering $\overline{c}_2$ & \centering $a_3$ & \centering $\overline{a}_3$ & \centering $b_3$ & \centering $\overline{b}_3$ & \centering $c_3$ & \centering $\overline{c}_3$ & Rating\\
\hline
\centering $\circ$   & \centering $\circ$   & \centering $\circ$   & \centering $\circ$   & \centering $\circ$   & \centering $\circ$   & \centering $\circ$   & \centering $\circ$   & \centering $\circ$   & \centering $\circ$   & \centering $\circ$   & \centering $\circ$   & \centering $\circ$   & \centering $\circ$   & \centering $\circ$   & \centering $\circ$   & \centering $\circ$   & \centering $\circ$   & \centering $\circ$   & \centering $\circ$   & \centering $\circ$   & \centering $\circ$   & \centering $\circ$   & \centering $\circ$   & \centering $\circ$   & \centering $\circ$   & $(13,0)$\\
\hline
\centering $\bullet$ & \centering $\bullet$ & \centering $\circ$   & \centering $\circ$   & \centering $\circ$   & \centering $\circ$   & \centering $\circ$   & \centering $\circ$   & \centering $\circ$   & \centering $\circ$   & \centering $\circ$   & \centering $\circ$   & \centering $\circ$   & \centering $\circ$   & \centering $\circ$   & \centering $\circ$   & \centering $\circ$   & \centering $\circ$   & \centering $\circ$   & \centering $\circ$   & \centering $\circ$   & \centering $\circ$   & \centering $\circ$   & \centering $\circ$   & \centering $\circ$   & \centering $\circ$   & $(13,2)$\\
\centering $\circ$   & \centering $\circ$   & \centering $\bullet$ & \centering $\bullet$ & \centering $\circ$   & \centering $\circ$   & \centering $\circ$   & \centering $\circ$   & \centering $\circ$   & \centering $\circ$   & \centering $\circ$   & \centering $\circ$   & \centering $\circ$   & \centering $\circ$   & \centering $\circ$   & \centering $\circ$   & \centering $\circ$   & \centering $\circ$   & \centering $\circ$   & \centering $\circ$   & \centering $\circ$   & \centering $\circ$   & \centering $\circ$   & \centering $\circ$   & \centering $\circ$   & \centering $\circ$   & $(13,2)$\\
\centering $\circ$   & \centering $\circ$   & \centering $\circ$   & \centering $\circ$   & \centering $\bullet$ & \centering $\bullet$ & \centering $\circ$   & \centering $\circ$   & \centering $\circ$   & \centering $\circ$   & \centering $\circ$   & \centering $\circ$   & \centering $\circ$   & \centering $\circ$   & \centering $\circ$   & \centering $\circ$   & \centering $\circ$   & \centering $\circ$   & \centering $\circ$   & \centering $\circ$   & \centering $\circ$   & \centering $\circ$   & \centering $\circ$   & \centering $\circ$   & \centering $\circ$   & \centering $\circ$   & $(13,2)$\\
\centering $\circ$   & \centering $\circ$   & \centering $\circ$   & \centering $\circ$   & \centering $\circ$   & \centering $\circ$   & \centering $\bullet$ & \centering $\bullet$ & \centering $\circ$   & \centering $\circ$   & \centering $\circ$   & \centering $\circ$   & \centering $\circ$   & \centering $\circ$   & \centering $\circ$   & \centering $\circ$   & \centering $\circ$   & \centering $\circ$   & \centering $\circ$   & \centering $\circ$   & \centering $\circ$   & \centering $\circ$   & \centering $\circ$   & \centering $\circ$   & \centering $\circ$   & \centering $\circ$   & $(13,2)$\\
\centering $\circ$   & \centering $\circ$   & \centering $\circ$   & \centering $\circ$   & \centering $\circ$   & \centering $\circ$   & \centering $\circ$   & \centering $\circ$   & \centering $\bullet$ & \centering $\bullet$ & \centering $\circ$   & \centering $\circ$   & \centering $\circ$   & \centering $\circ$   & \centering $\circ$   & \centering $\circ$   & \centering $\circ$   & \centering $\circ$   & \centering $\circ$   & \centering $\circ$   & \centering $\circ$   & \centering $\circ$   & \centering $\circ$   & \centering $\circ$   & \centering $\circ$   & \centering $\circ$   & $(13,2)$\\
\centering $\circ$   & \centering $\circ$   & \centering $\circ$   & \centering $\circ$   & \centering $\circ$   & \centering $\circ$   & \centering $\circ$   & \centering $\circ$   & \centering $\circ$   & \centering $\circ$   & \centering $\bullet$ & \centering $\bullet$ & \centering $\circ$   & \centering $\circ$   & \centering $\circ$   & \centering $\circ$   & \centering $\circ$   & \centering $\circ$   & \centering $\circ$   & \centering $\circ$   & \centering $\circ$   & \centering $\circ$   & \centering $\circ$   & \centering $\circ$   & \centering $\circ$   & \centering $\circ$   & $(13,2)$\\
\centering $\circ$   & \centering $\circ$   & \centering $\circ$   & \centering $\circ$   & \centering $\circ$   & \centering $\circ$   & \centering $\circ$   & \centering $\circ$   & \centering $\circ$   & \centering $\circ$   & \centering $\circ$   & \centering $\circ$   & \centering $\bullet$ & \centering $\bullet$ & \centering $\circ$   & \centering $\circ$   & \centering $\circ$   & \centering $\circ$   & \centering $\circ$   & \centering $\circ$   & \centering $\circ$   & \centering $\circ$   & \centering $\circ$   & \centering $\circ$   & \centering $\circ$   & \centering $\circ$   & $(13,2)$\\
\centering $\circ$   & \centering $\circ$   & \centering $\circ$   & \centering $\circ$   & \centering $\circ$   & \centering $\circ$   & \centering $\circ$   & \centering $\circ$   & \centering $\circ$   & \centering $\circ$   & \centering $\circ$   & \centering $\circ$   & \centering $\circ$   & \centering $\circ$   & \centering $\bullet$ & \centering $\bullet$ & \centering $\circ$   & \centering $\circ$   & \centering $\circ$   & \centering $\circ$   & \centering $\circ$   & \centering $\circ$   & \centering $\circ$   & \centering $\circ$   & \centering $\circ$   & \centering $\circ$   & $(13,2)$\\
\centering $\circ$   & \centering $\circ$   & \centering $\circ$   & \centering $\circ$   & \centering $\circ$   & \centering $\circ$   & \centering $\circ$   & \centering $\circ$   & \centering $\circ$   & \centering $\circ$   & \centering $\circ$   & \centering $\circ$   & \centering $\circ$   & \centering $\circ$   & \centering $\circ$   & \centering $\circ$   & \centering $\bullet$ & \centering $\bullet$ & \centering $\circ$   & \centering $\circ$   & \centering $\circ$   & \centering $\circ$   & \centering $\circ$   & \centering $\circ$   & \centering $\circ$   & \centering $\circ$   & $(13,2)$\\
\centering $\circ$   & \centering $\circ$   & \centering $\circ$   & \centering $\circ$   & \centering $\circ$   & \centering $\circ$   & \centering $\circ$   & \centering $\circ$   & \centering $\circ$   & \centering $\circ$   & \centering $\circ$   & \centering $\circ$   & \centering $\circ$   & \centering $\circ$   & \centering $\circ$   & \centering $\circ$   & \centering $\circ$   & \centering $\circ$   & \centering $\bullet$ & \centering $\bullet$ & \centering $\circ$   & \centering $\circ$   & \centering $\circ$   & \centering $\circ$   & \centering $\circ$   & \centering $\circ$   & $(13,2)$\\
\centering $\circ$   & \centering $\circ$   & \centering $\circ$   & \centering $\circ$   & \centering $\circ$   & \centering $\circ$   & \centering $\circ$   & \centering $\circ$   & \centering $\circ$   & \centering $\circ$   & \centering $\circ$   & \centering $\circ$   & \centering $\circ$   & \centering $\circ$   & \centering $\circ$   & \centering $\circ$   & \centering $\circ$   & \centering $\circ$   & \centering $\circ$   & \centering $\circ$   & \centering $\bullet$ & \centering $\bullet$ & \centering $\circ$   & \centering $\circ$   & \centering $\circ$   & \centering $\circ$   & $(13,2)$\\
\centering $\circ$   & \centering $\circ$   & \centering $\circ$   & \centering $\circ$   & \centering $\circ$   & \centering $\circ$   & \centering $\circ$   & \centering $\circ$   & \centering $\circ$   & \centering $\circ$   & \centering $\circ$   & \centering $\circ$   & \centering $\circ$   & \centering $\circ$   & \centering $\circ$   & \centering $\circ$   & \centering $\circ$   & \centering $\circ$   & \centering $\circ$   & \centering $\circ$   & \centering $\circ$   & \centering $\circ$   & \centering $\bullet$ & \centering $\bullet$ & \centering $\circ$   & \centering $\circ$   & $(13,2)$\\
\centering $\circ$   & \centering $\circ$   & \centering $\circ$   & \centering $\circ$   & \centering $\circ$   & \centering $\circ$   & \centering $\circ$   & \centering $\circ$   & \centering $\circ$   & \centering $\circ$   & \centering $\circ$   & \centering $\circ$   & \centering $\circ$   & \centering $\circ$   & \centering $\circ$   & \centering $\circ$   & \centering $\circ$   & \centering $\circ$   & \centering $\circ$   & \centering $\circ$   & \centering $\circ$   & \centering $\circ$   & \centering $\circ$   & \centering $\circ$   & \centering $\bullet$ & \centering $\bullet$ & $(13,2)$\\
\hline
\centering $\bullet$ & \centering $\circ$   & \centering $\circ$   & \centering $\bullet$ & \centering $\bullet$ & \centering $\circ$   & \centering $\circ$   & \centering $\circ$   & \centering $\bullet$ & \centering $\circ$   & \centering $\bullet$ & \centering $\circ$   & \centering $\circ$   & \centering $\circ$   & \centering $\circ$   & \centering $\circ$   & \centering $\circ$   & \centering $\circ$   & \centering $\circ$   & \centering $\circ$   & \centering $\circ$   & \centering $\circ$   & \centering $\circ$   & \centering $\circ$   & \centering $\circ$   & \centering $\circ$   & $(14,4)$\\
\centering $\circ$   & \centering $\bullet$ & \centering $\bullet$ & \centering $\circ$   & \centering $\circ$   & \centering $\circ$   & \centering $\bullet$ & \centering $\circ$   & \centering $\circ$   & \centering $\circ$   & \centering $\circ$   & \centering $\circ$   & \centering $\circ$   & \centering $\circ$   & \centering $\bullet$ & \centering $\circ$   & \centering $\bullet$ & \centering $\circ$   & \centering $\circ$   & \centering $\circ$   & \centering $\circ$   & \centering $\circ$   & \centering $\circ$   & \centering $\circ$   & \centering $\circ$   & \centering $\circ$   & $(14,4)$\\
\centering $\circ$   & \centering $\circ$   & \centering $\bullet$ & \centering $\circ$   & \centering $\circ$   & \centering $\bullet$ & \centering $\circ$   & \centering $\bullet$ & \centering $\circ$   & \centering $\circ$   & \centering $\circ$   & \centering $\circ$   & \centering $\circ$   & \centering $\circ$   & \centering $\circ$   & \centering $\circ$   & \centering $\circ$   & \centering $\circ$   & \centering $\circ$   & \centering $\circ$   & \centering $\bullet$ & \centering $\circ$   & \centering $\bullet$ & \centering $\circ$   & \centering $\circ$   & \centering $\circ$   & $(14,4)$\\
\hline
\centering $\circ$   & \centering $\circ$   & \centering $\circ$   & \centering $\circ$   & \centering $\circ$   & \centering $\circ$   & \centering $\circ$   & \centering $\circ$   & \centering $\circ$   & \centering $\bullet$ & \centering $\bullet$ & \centering $\circ$   & \centering $\bullet$ & \centering $\circ$   & \centering $\circ$   & \centering $\circ$   & \centering $\circ$   & \centering $\circ$   & \centering $\circ$   & \centering $\circ$   & \centering $\circ$   & \centering $\circ$   & \centering $\circ$   & \centering $\circ$   & \centering $\circ$   & \centering $\circ$   & $(14,2)$\\
\centering $\circ$   & \centering $\circ$   & \centering $\circ$   & \centering $\circ$   & \centering $\circ$   & \centering $\circ$   & \centering $\circ$   & \centering $\circ$   & \centering $\circ$   & \centering $\circ$   & \centering $\circ$   & \centering $\circ$   & \centering $\circ$   & \centering $\circ$   & \centering $\circ$   & \centering $\bullet$ & \centering $\bullet$ & \centering $\circ$   & \centering $\bullet$ & \centering $\circ$   & \centering $\circ$   & \centering $\circ$   & \centering $\circ$   & \centering $\circ$   & \centering $\circ$   & \centering $\circ$   & $(14,2)$\\
\centering $\circ$   & \centering $\circ$   & \centering $\circ$   & \centering $\circ$   & \centering $\circ$   & \centering $\circ$   & \centering $\circ$   & \centering $\circ$   & \centering $\circ$   & \centering $\circ$   & \centering $\circ$   & \centering $\circ$   & \centering $\circ$   & \centering $\circ$   & \centering $\circ$   & \centering $\circ$   & \centering $\circ$   & \centering $\circ$   & \centering $\circ$   & \centering $\circ$   & \centering $\circ$   & \centering $\bullet$ & \centering $\bullet$ & \centering $\circ$   & \centering $\bullet$ & \centering $\circ$   & $(14,2)$\\
\hline
\centering $x$ & \centering $\overline{x}$ & \centering $y$ & \centering $\overline{y}$ & \centering $z$ & \centering $\overline{z}$ & \centering $w$ & \centering $\overline{w}$ & \centering $a_1$ & \centering $\overline{a}_1$ & \centering $b_1$ & \centering $\overline{b}_1$ & \centering $c_1$ & \centering $\overline{c}_1$ & \centering $a_2$ & \centering $\overline{a}_2$ & \centering $b_2$ & \centering $\overline{b}_2$ & \centering $c_2$ & \centering $\overline{c}_2$ & \centering $a_3$ & \centering $\overline{a}_3$ & \centering $b_3$ & \centering $\overline{b}_3$ & \centering $c_3$ & \centering $\overline{c}_3$ & Rating\\
\hline
\end{tabular}
\end{table}

\vspace{-10pt}
The solutions to both problems are exactly ten, and are listed below.
\vspace{-15pt}
\begin{table}[h!p!]
\centering
\subtable{
\begin{tabular}{|p{8pt}p{8pt}p{8pt}p{8pt}|}
\hline
\centering $x$ & \centering $y$ & \centering $z$ & {\centering $w$}\\
\hline
\centering T & \centering T & \centering T & {\centering T}\\
\centering T & \centering T & \centering T & {\centering F}\\
\centering T & \centering T & \centering F & {\centering T}\\
\centering T & \centering T & \centering F & {\centering F}\\
\centering T & \centering F & \centering F & {\centering T}\\
\centering F & \centering T & \centering T & {\centering T}\\
\centering F & \centering T & \centering T & {\centering F}\\
\centering F & \centering F & \centering T & {\centering F}\\
\centering F & \centering F & \centering F & {\centering T}\\
\centering F & \centering F & \centering F & {\centering F}\\
\hline
\end{tabular}
}\hspace{5pt}
\subtable{
\begin{tabular}{|p{8pt}p{8pt}|p{8pt}p{8pt}|p{8pt}p{8pt}|p{8pt}p{8pt}|p{8pt}p{8pt}p{8pt}p{8pt}p{8pt}p{8pt}|p{8pt}p{8pt}p{8pt}p{8pt}p{8pt}p{8pt}|p{8pt}p{8pt}p{8pt}p{8pt}p{8pt}p{8pt}|}
\hline
\centering $x$ & \centering $\overline{x}$ & \centering $y$ & \centering $\overline{y}$ & \centering $z$ & \centering $\overline{z}$ & \centering $w$ & \centering $\overline{w}$ & \centering $a_1$ & \centering $\overline{a}_1$ & \centering $b_1$ & \centering $\overline{b}_1$ & \centering $c_1$ & \centering $\overline{c}_1$ & \centering $a_2$ & \centering $\overline{a}_2$ & \centering $b_2$ & \centering $\overline{b}_2$ & \centering $c_2$ & \centering $\overline{c}_2$ & \centering $a_3$ & \centering $\overline{a}_3$ & \centering $b_3$ & \centering $\overline{b}_3$ & \centering $c_3$ & {\centering $\overline{c}_3$}\\
\hline
\centering $\bullet$ & \centering $\circ$   & \centering $\bullet$ & \centering $\circ$   & \centering $\bullet$ & \centering $\circ$   & \centering $\bullet$ & \centering $\circ$   & \centering $\circ$   & \centering $\bullet$ & \centering $\bullet$ & \centering $\circ$   & \centering $\circ$   & \centering $\bullet$ & \centering $\circ$   & \centering $\bullet$ & \centering $\bullet$ & \centering $\circ$   & \centering $\circ$   & \centering $\bullet$ & \centering $\bullet$ & \centering $\circ$ & \centering $\bullet$ & \centering $\circ$   & \centering $\bullet$ & {\centering $\circ$} \\
\centering $\bullet$ & \centering $\circ$   & \centering $\bullet$ & \centering $\circ$   & \centering $\bullet$ & \centering $\circ$   & \centering $\circ$   & \centering $\bullet$ & \centering $\circ$   & \centering $\bullet$ & \centering $\bullet$ & \centering $\circ$   & \centering $\circ$   & \centering $\bullet$ & \centering $\bullet$ & \centering $\circ$   & \centering $\bullet$ & \centering $\circ$   & \centering $\bullet$ & \centering $\circ$   & \centering $\circ$   & \centering $\bullet$ & \centering $\bullet$ & \centering $\circ$   & \centering $\circ$   & {\centering $\bullet$} \\
\centering $\bullet$ & \centering $\circ$   & \centering $\bullet$ & \centering $\circ$   & \centering $\circ$   & \centering $\bullet$ & \centering $\bullet$ & \centering $\circ$   & \centering $\bullet$ & \centering $\circ$   & \centering $\bullet$ & \centering $\circ$   & \centering $\bullet$ & \centering $\circ$   & \centering $\circ$   & \centering $\bullet$ & \centering $\bullet$ & \centering $\circ$   & \centering $\circ$   & \centering $\bullet$ & \centering $\circ$   & \centering $\bullet$ & \centering $\bullet$ & \centering $\circ$   & \centering $\circ$   & {\centering $\bullet$} \\
\centering $\bullet$ & \centering $\circ$   & \centering $\bullet$ & \centering $\circ$   & \centering $\circ$   & \centering $\bullet$ & \centering $\circ$   & \centering $\bullet$ & \centering $\bullet$ & \centering $\circ$   & \centering $\bullet$ & \centering $\circ$   & \centering $\bullet$ & \centering $\circ$   & \centering $\bullet$ & \centering $\circ$   & \centering $\bullet$ & \centering $\circ$   & \centering $\bullet$ & \centering $\circ$   & \centering $\circ$   & \centering $\bullet$ & \centering $\circ$   & \centering $\bullet$ & \centering $\bullet$ & {\centering $\circ$}   \\
\centering $\bullet$ & \centering $\circ$   & \centering $\circ$   & \centering $\bullet$ & \centering $\circ$   & \centering $\bullet$ & \centering $\bullet$ & \centering $\circ$   & \centering $\circ$   & \centering $\bullet$ & \centering $\bullet$ & \centering $\circ$   & \centering $\circ$   & \centering $\bullet$ & \centering $\bullet$ & \centering $\circ$   & \centering $\bullet$ & \centering $\circ$   & \centering $\bullet$ & \centering $\circ$   & \centering $\bullet$ & \centering $\circ$   & \centering $\bullet$ & \centering $\circ$   & \centering $\bullet$ & {\centering $\circ$}   \\
\centering $\circ$   & \centering $\bullet$ & \centering $\bullet$ & \centering $\circ$   & \centering $\bullet$ & \centering $\circ$   & \centering $\bullet$ & \centering $\circ$   & \centering $\bullet$ & \centering $\circ$   & \centering $\bullet$ & \centering $\circ$   & \centering $\bullet$ & \centering $\circ$   & \centering $\circ$   & \centering $\bullet$ & \centering $\circ$   & \centering $\bullet$ & \centering $\bullet$ & \centering $\circ$   & \centering $\bullet$ & \centering $\circ$   & \centering $\bullet$ & \centering $\circ$   & \centering $\bullet$ & {\centering $\circ$}   \\
\centering $\circ$   & \centering $\bullet$ & \centering $\bullet$ & \centering $\circ$   & \centering $\bullet$ & \centering $\circ$   & \centering $\circ$   & \centering $\bullet$ & \centering $\bullet$ & \centering $\circ$   & \centering $\bullet$ & \centering $\circ$   & \centering $\bullet$ & \centering $\circ$   & \centering $\circ$   & \centering $\bullet$ & \centering $\bullet$ & \centering $\circ$   & \centering $\circ$   & \centering $\bullet$ & \centering $\circ$   & \centering $\bullet$ & \centering $\bullet$ & \centering $\circ$   & \centering $\circ$   & {\centering $\bullet$} \\
\centering $\circ$   & \centering $\bullet$ & \centering $\circ$   & \centering $\bullet$ & \centering $\bullet$ & \centering $\circ$   & \centering $\circ$   & \centering $\bullet$ & \centering $\circ$   & \centering $\bullet$ & \centering $\bullet$ & \centering $\circ$   & \centering $\circ$   & \centering $\bullet$ & \centering $\bullet$ & \centering $\circ$   & \centering $\bullet$ & \centering $\circ$   & \centering $\bullet$ & \centering $\circ$   & \centering $\bullet$ & \centering $\circ$   & \centering $\bullet$ & \centering $\circ$   & \centering $\bullet$ & {\centering $\circ$}   \\
\centering $\circ$   & \centering $\bullet$ & \centering $\circ$   & \centering $\bullet$ & \centering $\circ$   & \centering $\bullet$ & \centering $\bullet$ & \centering $\circ$   & \centering $\bullet$ & \centering $\circ$   & \centering $\bullet$ & \centering $\circ$   & \centering $\bullet$ & \centering $\circ$   & \centering $\circ$   & \centering $\bullet$ & \centering $\bullet$ & \centering $\circ$   & \centering $\circ$   & \centering $\bullet$ & \centering $\bullet$ & \centering $\circ$   & \centering $\bullet$ & \centering $\circ$   & \centering $\bullet$ & {\centering $\circ$}   \\
\centering $\circ$   & \centering $\bullet$ & \centering $\circ$   & \centering $\bullet$ & \centering $\circ$   & \centering $\bullet$ & \centering $\circ$   & \centering $\bullet$ & \centering $\bullet$ & \centering $\circ$   & \centering $\bullet$ & \centering $\circ$   & \centering $\bullet$ & \centering $\circ$   & \centering $\bullet$ & \centering $\circ$   & \centering $\bullet$ & \centering $\circ$   & \centering $\bullet$ & \centering $\circ$   & \centering $\circ$   & \centering $\bullet$ & \centering $\bullet$ & \centering $\circ$   & \centering $\circ$   & {\centering $\bullet$} \\
\hline
\end{tabular}
}
\end{table}

We remark that, in order to determine the values of the auxiliary variables $a_i$, $b_i$ and $c_i$ when a solution to the Boolean satisfiability problem is given, it is sufficient to check how many literals of $C_i$ are satisfied. $a_i$ is true if and only if exactly one literal is satisfied, $b_i$ is false if and only if all three literals are satisfied, and $c_i$ is true if and only if $a_i=b_i$.

\section{Related results}\label{s4}

We describe some applications of \hyperref[thm:main]{Theorem~\ref*{thm:main}} to several complexity problems.

\begin{corollary}\label{cor:npc}
$(2)$-\MSP, $(2)$-\MSPB\ and \MSPW\ are \NP-complete.
\end{corollary}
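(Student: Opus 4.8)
The plan is to obtain the whole statement essentially for free from the machinery already in place, since all the substantive work lives in the lemmas behind \hyperref[thm:main]{Theorem~\ref*{thm:main}}. The proof splits, as usual, into a membership part and a hardness part.

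For membership in \NP, I would simply observe that for each of the three problems a code $x\in\codespace$ is a certificate of polynomial size, and that verifying $r=\rho(x,g)$ (respectively $r=\rho_b(x,g)$, $r=\rho_w(x,g)$) for every query $(g,r)\in Q$ requires only a linear number of evaluations of $\alpha$ and $\beta$, hence runs in polynomial time; this is exactly why the counting versions were \SP\ problems in the first place.

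For \NP-hardness, the key remark I would make is that a parsimonious reduction from a counting problem $\sharpp A$ to a counting problem $\sharpp B$ is, \emph{a fortiori}, a polynomial-time many-one reduction from the decision problem $A$ to the decision problem $B$: it sends an instance with $k$ solutions to an instance with exactly $k$ solutions, hence yes-instances to yes-instances and no-instances to no-instances. Applying this observation to \hyperref[l5]{Lemma~\ref*{l5}}, \hyperref[l4]{Lemma~\ref*{l4}} and \hyperref[l3]{Lemma~\ref*{l3}} respectively yields polynomial-time many-one reductions from $3$-\SAT\ to $(2)$-\MSP, to $(2)$-\MSPB, and to \MSPW. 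Since $3$-\SAT\ is \NP-complete, \NP-hardness of all three target problems follows, and together with the membership part they are \NP-complete.

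I do not expect any real obstacle here; the one point that deserves a word of care is that \hyperref[l3]{Lemma~\ref*{l3}} reduces to \MSPW\ with an \emph{unbounded} number of colors rather than to a fixed-color restriction, so only \MSPW\ itself — and not any $(c)$-\MSPW — can be claimed \NP-complete. This is unavoidable and in fact sharp, since by \hyperref[l1]{Lemma~\ref*{l1}} (equivalently part~c of \hyperref[thm:main]{Theorem~\ref*{thm:main}}) every $(c)$-\MSPW\ lies in \P. For the black-peg and standard variants the constructions of \hyperref[l4]{Lemma~\ref*{l4}} and \hyperref[l5]{Lemma~\ref*{l5}} already use only two colors, so the stronger $(2)$-versions come at no extra cost.
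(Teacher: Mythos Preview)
Your proposal is correct and takes essentially the same approach as the paper: the paper's proof is the single sentence ``Parsimonious reductions among \SP\ problems are \emph{a fortiori} Karp reductions among the corresponding \NP\ problems,'' which is exactly your hardness argument, while membership in \NP\ was already noted earlier in the paper. Your additional remark explaining why only \MSPW\ (and not any $(c)$-\MSPW) can be claimed \NP-complete is a nice clarification but not part of the paper's own proof.
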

\begin{proof}
Parsimonious reductions among \SP\ problems are \emph{a fortiori} Karp reductions among the corresponding \NP\ problems.
\qed
\end{proof}

So far, we made no assumptions on the queries in our problem instances, which leads to a more general but somewhat fictitious theory. Since in a real game of Mastermind the codebreaker's queries are guaranteed to have at least a solution (i.e., the secret code chosen by the codemaker), more often than not the codebreaker is in a position to exploit this information to his advantage. However, we show that such information does not make counting problems substantially easier.

\begin{corollary}
\sharpp$(2)$-\MSP, \sharpp$(2)$-\MSPB\ and \SMSPW, with the promise that the number of solutions is at least $k$, are all \SP-complete problems under Turing reductions, for every $k\geqslant 1$.
\end{corollary}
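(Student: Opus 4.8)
The plan is to give, for each fixed $k$, a one–query Turing reduction from the corresponding \emph{non-promise} counting problem — which is \SP-complete by \hyperref[thm:main]{Theorem~\ref*{thm:main}} — to its promised variant. In fact it is cleaner to reduce at the level of \SAT: it suffices to Turing-reduce \sharpp$3$-\SAT\ to ``\sharpp$3$-\SAT\ restricted to formulas having at least $k$ satisfying assignments'', and then compose this with the parsimonious reductions of \hyperref[l3]{Lemma~\ref*{l3}}, \hyperref[l4]{Lemma~\ref*{l4}} and \hyperref[l5]{Lemma~\ref*{l5}}. Since those reductions preserve the exact number of solutions, a formula with at least $k$ models is sent to a Mastermind instance with at least $k$ solutions, so the whole composition lands inside the promise; and a Turing reduction followed by a parsimonious (hence many-one, hence Turing) reduction is a Turing reduction, so the $\leqslant_{\rm pars}$-hardness of \sharpp$3$-\SAT\ is inherited. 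Membership of the promise problems in \SP\ is inherited verbatim from the non-promise versions.

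The padding step is as follows. Given a $3$-CNF $\varphi$ on variables $x_1,\dots,x_v$ with clauses $C_i$, introduce a fresh ``switch'' variable $z$ and fresh variables $w_1,\dots,w_{k-1}$, and set
\[\varphi'\;=\;\bigwedge_i(\bar z\vee C_i)\;\wedge\;\bigwedge_i(z\vee\bar x_i)\;\wedge\;\bigwedge_{p<q}(\bar z\vee\bar w_p\vee\bar w_q)\;\wedge\;\bigwedge_p(z\vee\bar w_p).\]
When $z=1$ the $w$'s are forced to $0$ and the $x$'s must satisfy $\varphi$, contributing exactly as many models as $\varphi$ has; when $z=0$ the $x$'s are forced to $0$ and the $w$'s may be any assignment with at most one $1$, contributing exactly $k$ models (the all-zero one plus the $k-1$ unit vectors). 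Hence $\varphi'$ has exactly $k$ more satisfying assignments than $\varphi$, in particular always at least $k$, and the count for $\varphi$ is recovered by a single oracle call followed by subtracting $k$. Finally, $\varphi'$ is brought back to the exact form required by \hyperref[l3]{Lemma~\ref*{l3}}–\hyperref[l5]{Lemma~\ref*{l5}} (three literals per clause, on three distinct variables, as in~\cite{yato}) by the routine parsimonious clause-splitting and variable-duplication manipulations; as $k$ is a constant, all of this is a polynomial transformation.

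The construction is identical for all three target problems; only the last parsimonious reduction changes — \hyperref[l5]{Lemma~\ref*{l5}} for \sharpp$(2)$-\MSP, \hyperref[l4]{Lemma~\ref*{l4}} for \sharpp$(2)$-\MSPB, and \hyperref[l3]{Lemma~\ref*{l3}} for \SMSPW\ (for white pegs we must use the unrestricted variant, since \sharpp$(c)$-\MSPW\ is in \P\ by \hyperref[l1]{Lemma~\ref*{l1}} for every constant $c$). The one point that really needs care is that the padding be \emph{additive} rather than multiplicative: simply appending a block of free, unconstrained pegs would multiply the solution count by a power of two and would leave an instance with zero solutions still at zero, breaking the promise. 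This is exactly why the reduction must route through the switch gadget, and also why the resulting reduction is only Turing and not parsimonious — no count-preserving map can turn an unsatisfiable instance into one with at least $k\geqslant 1$ solutions.
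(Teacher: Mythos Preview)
Your argument is sound and the additive-padding idea works, but it is not the route the paper takes. The paper reduces from \SMATCH\ instead of \sharpp$3$-\SAT: a graph with at least $k$ edges automatically has at least $k$ matchings (each single edge is one), so after pushing such a graph through the parsimonious reduction of \hyperref[thm:main]{Theorem~\ref*{thm:main}} the promise is met for free, while graphs with fewer than $k$ edges are handled by brute force. That is shorter and avoids any gadgetry, at the price of importing Valiant's Turing-hardness result for \SMATCH; your route stays self-contained within \sharpp$3$-\SAT\ and makes the ``shift by $k$, then subtract $k$'' mechanism explicit, which is arguably more transparent about \emph{why} only a Turing reduction is possible.

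One concrete slip to fix: in your displayed formula the $z$/$\bar z$ polarities in the two $w$-blocks are reversed relative to your prose. As written, setting $z=1$ only enforces ``at most one $w_p=1$'' (not all zero), while $z=0$ forces every $w_p=0$ (not ``at most one''); the model count then comes out to $k\cdot\#\varphi+1$ rather than $\#\varphi+k$, and the promise fails when $\#\varphi=0$ and $k\geqslant 2$. Replacing the last two conjuncts by $\bigwedge_p(\bar z\vee\bar w_p)\;\wedge\;\bigwedge_{p<q}(z\vee\bar w_p\vee\bar w_q)$ matches your verbal description and yields exactly $\#\varphi+k$ models, after which the rest of your argument goes through.
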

\begin{proof}
Let \SMATCH\ be the problem of counting the matchings of any size in a given graph, which is known to be \SP-complete under Turing reductions~\cite{valiant2}. Let ${\rm{\Pi}}_k$ be the problem \sharpp$(2)$-\MSP\ (respectively, \sharpp$(2)$-\MSPB, \SMSPW) restricted to instances with at least $k$ solutions, and let us show that \SMATCH\ $\leqslant_{\rm{T}}$ ${\rm{\Pi}}_k$. Given a graph $G$, if it has fewer than $k$ edges, we can count all the matchings in linear time. Otherwise, there must be at least $k$ matchings (each edge $e$ yields at least the matching $\{e\}$), so we parsimoniously map $G$ into an instance of ${\rm{\Pi}}_k$ via \hyperref[thm:main]{Theorem~\ref*{thm:main}}, we call an oracle for ${\rm{\Pi}}_k$, and output its answer.
\qed
\end{proof}

The following result, for $k=1$, settles an issue concerning the determination of \MSP\ instances with unique solution, which was left unsolved in~\cite{zhang}. We actually prove more: Even if a solution is given as input, it is hard to determine if it is unique. Therefore, not only solving Mastermind puzzles is hard, but \emph{designing} puzzles around a solution is also hard.

\begin{corollary}\label{cor:zhang}
For every $k\geqslant 1$, the problem of deciding if an instance of $(2)$-\MSP, $(2)$-\MSPB\ or \MSPW\ has strictly more than $k$ solutions is \NP-complete, even if $k$ solutions are explicitly given as input.
\end{corollary}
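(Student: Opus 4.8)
The plan is to read this off \hyperref[thm:main]{Theorem~\ref*{thm:main}}, exploiting the fact that the reductions of \hyperref[l3]{Lemma~\ref*{l3}}, \hyperref[l4]{Lemma~\ref*{l4}} and \hyperref[l5]{Lemma~\ref*{l5}} are parsimonious \emph{via an explicit, polynomial-time computable correspondence} between satisfying assignments and solution codes. Because of this, it is enough to prove the following purely Boolean statement: for every constant $k\geqslant 1$, deciding whether a 3-CNF formula has strictly more than $k$ satisfying assignments is \NP-complete, even when $k$ distinct satisfying assignments are given as part of the input. Membership in \NP\ is immediate (verify the $k$ supplied assignments, then guess one more assignment distinct from all of them and check it). So the work is (i) the \NP-hardness of this padded ``another-solution'' problem and (ii) transporting it to the three Mastermind problems through \hyperref[thm:main]{Theorem~\ref*{thm:main}}.

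For (i) I would reduce from \SAT\ (equivalently \sharpp$3$-\SAT's decision version). Given a 3-CNF formula $\varphi$ over variables $x_1,\dots,x_v$, introduce a fresh \emph{selector} variable $b$ together with a constant-size gadget $H$ on $\lceil\log_2 k\rceil$ fresh variables $u_1,\dots,u_m$ whose set of satisfying assignments is a fixed size-$k$ set $\{\tau_1,\dots,\tau_k\}$; such a gadget has constant size since $k$ is a constant (pin down $m$ with $2^m\geqslant k$ and exclude the $2^m-k$ superfluous assignments with one clause each). Now let $\varphi'$ be the conjunction of: the clauses $(b\vee C_i)$ for every clause $C_i$ of $\varphi$, together with the width-$2$ clauses forcing $u$ to the all-false value whenever $b$ is false; and the clauses $(\neg b\vee D_\ell)$ for every clause $D_\ell$ of $H$, together with the clauses $(\neg b\vee\neg x_i)$ forcing $x$ to the all-false value whenever $b$ is true. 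Then the satisfying assignments of $\varphi'$ with $b$ false are exactly the triples (``$b$ false'', ``$x$ a satisfying assignment of $\varphi$'', ``$u$ all false''), while those with $b$ true are exactly the $k$ triples (``$b$ true'', ``$x$ all false'', ``$u=\tau_j$''), $j=1,\dots,k$. Hence $\varphi'$ always possesses those $k$ explicit solutions, and it has a $(k{+}1)$-st satisfying assignment if and only if $\varphi$ is satisfiable. Finally, convert $\varphi'$ to 3-CNF by the standard parsimonious clause-splitting (this is where \cite{yato} is handy), carrying the $k$ witnesses along.

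For (ii), feed the resulting 3-CNF formula together with its $k$ explicit satisfying assignments into the reduction of \hyperref[l5]{Lemma~\ref*{l5}} (for $(2)$-\MSP), \hyperref[l4]{Lemma~\ref*{l4}} (for $(2)$-\MSPB), or \hyperref[l3]{Lemma~\ref*{l3}} (for \MSPW). Parsimony guarantees that the produced Mastermind instance has exactly as many solution codes as $\varphi'$ has satisfying assignments, so it has strictly more than $k$ of them precisely when $\varphi$ is satisfiable; applying the explicit assignment-to-code map of the relevant lemma to the $k$ witnesses yields $k$ explicit solution codes for that instance. This exhibits a Karp reduction from \SAT\ to each of the three promised problems that additionally outputs $k$ solutions, and with the \NP\ upper bound above this proves \NP-completeness for every constant $k\geqslant 1$. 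The case $k=1$ in particular shows that telling whether a $(2)$-\MSP\ (or $(2)$-\MSPB, or \MSPW) instance with a known solution has a \emph{second} one is \NP-complete, which settles the complexity of recognizing uniquely-solvable instances left open in \cite{zhang}.

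The step I expect to demand the most care is the middle one: arranging the padded formula $\varphi'$ so that it has \emph{exactly} $k$ guaranteed solutions---not fewer and not more---and checking that the reduction back to 3-CNF neither creates nor deletes any satisfying assignment, so that the predicate ``strictly more than $k$ solutions'' is preserved on the nose. Everything else is bookkeeping layered on top of \hyperref[thm:main]{Theorem~\ref*{thm:main}}.
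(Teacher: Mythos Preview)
Your argument is correct, but it takes a different route from the paper's. The paper dispatches the corollary in two lines by invoking the \emph{Another Solution Problem} framework of~\cite{yato}: since the reductions of \hyperref[thm:main]{Theorem~\ref*{thm:main}} are parsimonious with an explicit, polynomially computable solution map (this is the remark at the end of \hyperref[s3]{Section~\ref*{s3}}), the three problems are \ASP-complete, and then Yato's general machinery immediately yields \NP-completeness of the associated $k$-\ASP\ decision problems. You instead re-derive the relevant piece of that machinery from scratch: you build by hand a padded 3-CNF formula $\varphi'$ with exactly $k$ explicit ``dummy'' satisfying assignments on top of any \SAT\ instance $\varphi$, so that a $(k{+}1)$-st solution exists iff $\varphi$ is satisfiable, and only then push everything through the parsimonious reductions of Lemmas~\ref*{l3}--\ref*{l5}. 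Your construction is sound (the selector variable $b$ cleanly separates the $k$ planted solutions from the ``real'' ones, and the clause-splitting to 3-CNF is parsimonious and carries the witnesses along), so what you gain is a fully self-contained proof that does not ask the reader to know~\cite{yato}; what the paper gains is brevity and a conceptual link to the \ASP\ framework, which also explains uniformly why the result holds for every $k$ without a bespoke gadget each time.
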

\begin{proof}
Not only do the parsimonious reductions given in \hyperref[thm:main]{Theorem~\ref*{thm:main}} preserve the number of solutions, but they actually yield an explicit polynomial-time computable transformation of solutions (cf.~the remark at the end of \hyperref[s3]{Section~\ref*{s3}}). Hence, the involved \SP-complete problems are also \ASP-complete as function problems, and their decision $k$-\ASP\ counterparts are accordingly \NP-complete~\cite{yato}.
\qed
\end{proof}

Remarkably, even if the codebreaker somehow knows that his previous queries are sufficient to uniquely determine the solution, he still has a hard time finding it.

\begin{corollary}\label{cor:unique}
The promise problem of finding the solution to an instance of $(2)$-\MSP, $(2)$-\MSPB\ or \MSPW, when the solution itself is known to be unique, is \NP-hard under randomized Turing reductions.
\end{corollary}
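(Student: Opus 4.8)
The plan is to build a randomized polynomial-time Turing reduction from \SAT\ to the promise problem by composing the classical Valiant--Vazirani isolation technique with the parsimonious reductions of \hyperref[thm:main]{Theorem~\ref*{thm:main}}. Recall that Valiant--Vazirani yields a randomized polynomial-time transformation of a 3-CNF formula $\varphi$ into a 3-CNF formula $\psi$ such that $\psi$ is unsatisfiable whenever $\varphi$ is, and $\psi$ has \emph{exactly one} satisfying assignment with probability at least $1/(8\,|\varphi|)$ whenever $\varphi$ is satisfiable; in particular, if $\psi$ has any solution at all, then so does $\varphi$. This is the ``isolating'' step that produces instances which, with non-negligible probability, satisfy a uniqueness promise.

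First I would feed such a $\psi$ into the parsimonious reduction of \hyperref[l5]{Lemma~\ref*{l5}} (for $(2)$-\MSP), of \hyperref[l4]{Lemma~\ref*{l4}} (for $(2)$-\MSPB), or of \hyperref[l3]{Lemma~\ref*{l3}} (for \MSPW), obtaining a Mastermind instance $I$ with exactly the same number of solutions as $\psi$. As noted in the remark closing \hyperref[s3]{Section~\ref*{s3}} and exploited in \hyperref[cor:zhang]{Corollary~\ref*{cor:zhang}}, these reductions come equipped with an explicit polynomial-time map carrying every solution of $I$ back to a satisfying assignment of $\psi$, hence of $\varphi$. Now query the oracle on $I$: if $I$ genuinely has a unique solution the promise is met, the oracle returns that solution, and we translate it back and \emph{verify} whether the resulting assignment actually satisfies $\varphi$; if it does, we accept. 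Repeating the entire construction $\Theta(|\varphi|)$ times with independent randomness, and accepting iff some iteration produces a verified satisfying assignment of $\varphi$ (rejecting otherwise), gives the reduction.

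For correctness, observe that the procedure is one-sided. If $\varphi\notin\SAT$ then every $\psi$, and hence every $I$, is solution-free; the oracle may answer arbitrarily (its promise fails), but no answer survives the verification against $\varphi$, so we never accept wrongly. If $\varphi\in\SAT$, each iteration independently makes $\psi$ — and therefore $I$ — uniquely solvable with probability at least $1/(8\,|\varphi|)$, and on such an $I$ the oracle \emph{must} output the unique solution, which we then verify and accept; over $\Theta(|\varphi|)$ iterations the probability of never succeeding drops below, say, $1/3$. This is precisely a randomized, in fact \RP-style, Turing reduction, so the promise problem is \NP-hard under randomized Turing reductions, uniformly for the three variants.

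The one delicate point — and the only place where care is needed — is that the Valiant--Vazirani output need not satisfy the uniqueness promise: $\psi$ (and thus $I$) may have two or more solutions, or none, in which cases the oracle's behaviour is entirely unconstrained. The way out is exactly the verification step: we never trust the oracle blindly but always re-check its purported solution against the original formula $\varphi$, so a violated promise merely wastes an iteration and can never cause a false accept, while independence of the trials lets ordinary amplification restore a constant success probability. I expect the rest — threading the solution-transformation of \hyperref[thm:main]{Theorem~\ref*{thm:main}} through each of the three Mastermind variants and bookkeeping the two-sided translation — to be routine.
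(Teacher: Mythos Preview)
Your proposal is correct and follows essentially the same approach as the paper: compose Valiant--Vazirani isolation with the parsimonious reductions of \hyperref[thm:main]{Theorem~\ref*{thm:main}}, call the uniqueness oracle, and verify the returned answer. The paper's version is terser---it invokes the blackbox fact $\SAT \leqslant_{\RP} \USAT$ and verifies the oracle's output directly against the Mastermind instance $f(\varphi)$ rather than translating back to $\varphi$---but the underlying argument is the same.
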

\begin{proof}
It is known that \SAT\ $\leqslant_{\bf RP}$ \USAT, where \USAT\ is the promise version of \SAT\ whose input formulas are known to have either zero or one satisfying assignments~\cite{vazirani}. Let $f$ be the composition of this reduction with the parsimonious one from Boolean formulas to instances of $(2)$-\MSP\ (respectively, $(2)$-\MSPB, \MSPW) given by \hyperref[thm:main]{Theorem~\ref*{thm:main}}. Our Turing reduction proceeds as follows: Given a Boolean formula $\varphi$, compute $f(\varphi)$ and submit it to an oracle that finds a correct solution $s$ of $(2)$-\MSP\ (respectively, $(2)$-\MSPB, \MSPW) when it is unique. Then output \textsc{Yes} if and only if $s$ is indeed a solution of $f(\varphi)$, which can be checked in polynomial time.
\qed
\end{proof}

\section{Further research}\label{s5}
In \hyperref[l1]{Lemma~\ref*{l1}} we showed that \sharpp$(c)$-\MSPW\ is solvable in polynomial time when $c$ is a constant, while in \hyperref[l3]{Lemma~\ref*{l3}} we proved that it becomes \SP-complete when $c=2n+1$. By making the code polynomially longer and filling the extra space with a fresh color, we can easily prove that also \sharpp$\left(\Theta(\sqrt[k]n)\right)$-\MSPW\ is \SP-complete, for every constant $k$. An obvious question arises: What is the lowest order of growth of $c(n)$ such that \sharpp$(c(n))$-\MSPW\ is \SP-complete?

We observed that \SMSP\ is a subproblem of several heuristics aimed at optimally guessing the secret code, but is \SMSP\ really inherent in the game? Perhaps the hardness of Mastermind is not captured by \SMSP\ or even \MSP, and there are cleverer, yet unknown, ways to play.

\begin{problem}
\MASTER.\\
\textit{Input:} $(n,c,Q,k)$, where $(n,c,Q)$ is an instance of \MSP, and $k\geqslant 0$.\\
\textit{Output:} \textsc{Yes} if the codebreaker has a strategy to guess the secret code in at most $k$ attempts, using information from $Q$. \textsc{No} otherwise.
\end{problem}
Notice that \MASTER\ belongs to \PSPACE, due to the polynomial upper bound on the length of the optimal strategy given by Chv\'atal~\cite{chvatal}. Our question is whether \MASTER\ is \PSPACE-complete.

To make the game more fun to play for the codemaker, whose role is otherwise too passive, we could let him change the secret code at every turn, coherently with the ratings of the previous guesses of the codebreaker. As a result, nothing changes for the codebreaker, except that he may perceive to be quite unlucky with his guesses, but the codemaker's game becomes rather interesting: By \hyperref[cor:zhang]{Corollary~\ref*{cor:zhang}}, even deciding if he has a non-trivial move is \NP-complete, but he can potentially force the codebreaker to always work in the worst-case scenario, and make him pay for his mistakes. We call this variation \emph{adaptive Mastermind}.

\bibliographystyle{plain}

\end{document}